\newtheorem{observation}{\bfseries Observation}
\newtheorem{proposition}{\bfseries Proposition}
\newtheorem{corollary}{\bfseries Corollary}
\newcommand{\blowup}[2]{B^{#1}(#2)}
\newcommand{\Set}[2]{\{#1 \colon #2\}}
\newcommand{\Gmed}{\ensuremath{G_\mathrm{med}}\xspace}
\newcommand{\Gmat}{\ensuremath{G_\mathrm{mat}}\xspace}
\newcommand{\orcidID}[1]{\href{https://orcid.org/#1}{\includegraphics[scale=.03]{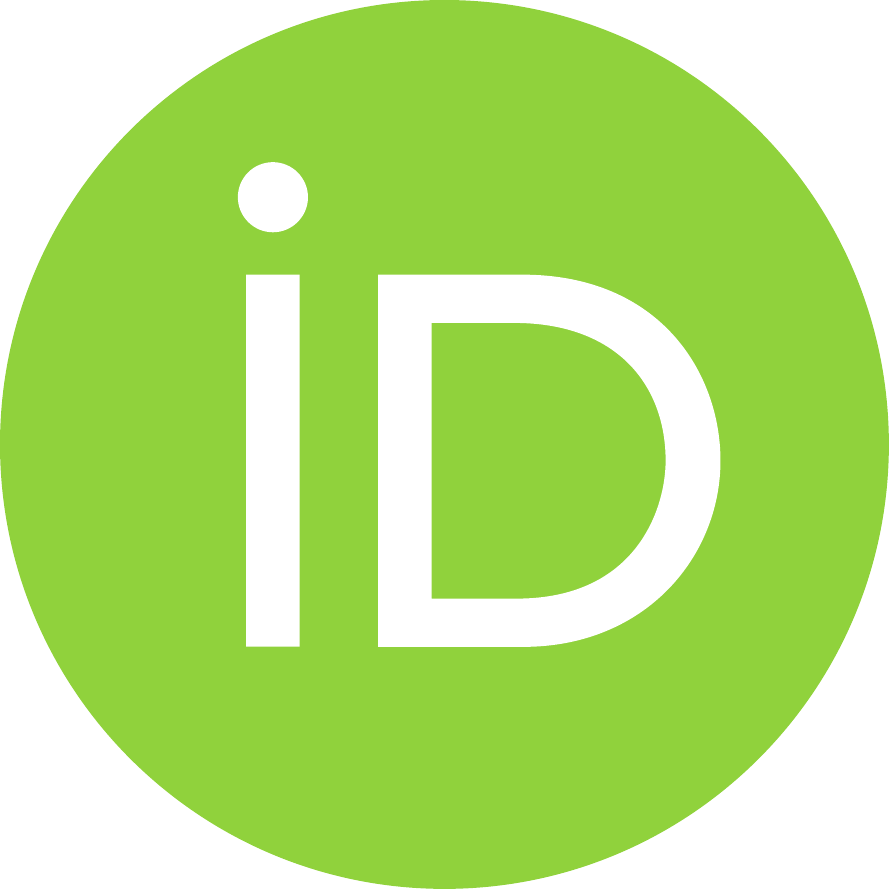}}}
\begin{document}

\HeadingAuthor{Evans et al.} 
\HeadingTitle{Angle Covers: Algorithms and Complexity} 

\title{Angle Covers: Algorithms and Complexity}
\Ack{The authors acknowledge support by NSERC Discovery Grant (W.E.),
    Simons Foundation Collaboration Grant for Mathematicians \#311772
    (E.G.), and DFG grant WO~758/10-1 (A.W.)}

\author[first]{William~Evans\orcidID{0000-0002-7611-507X}}{will@cs.ubc.ca}

\author[second]{Ellen~Gethner}{ellen.gethner@ucdenver.edu}

\author[first]{Jack~Spalding-Jamieson}{jacketsj@alumni.ubc.ca}

\author[third]{Alexander~Wolff\orcidID{0000-0001-5872-718X}}{}

\affiliation[first]{Dept.~of Computer Science,
Univ. of British Columbia,\\
   Vancouver, B.C., Canada.}
\affiliation[second]{Dept.~of Computer Science and
  Engineering\\
  University of Colorado Denver,
  U.S.A.}
\affiliation[third]{Universit\"at W\"urzburg, W\"urzburg, Germany}

\submitted{}%
\reviewed{}%
\revised{}%
\reviewed{}%
\revised{}%
\accepted{}%
\final{}%
\published{}%
\type{}%
\editor{}%

\maketitle

\begin{abstract}
  Consider a graph with a rotation system, namely, for every vertex, a
  circular ordering of the incident edges.  Given such a graph, an
  \emph{angle cover} maps every vertex to a pair of consecutive edges
  in the ordering~-- an \emph{angle}~-- such that each edge
  participates in at least one such pair.  We show that any graph of
  maximum degree~4 admits an angle cover, give a poly-time algorithm
  for deciding if a graph with no degree-3 vertices has an
  angle-cover, and prove that, given a graph of maximum degree~5, it
  is NP-hard to decide whether it admits an angle cover.  We also
  consider extensions of the angle cover problem where every vertex
  selects a fixed number $a>1$ of angles or where an angle consists of
  more than two consecutive edges.  We show an application of angle
  covers to the problem of deciding if the 2-blowup of a planar graph
  has isomorphic thickness 2.
\end{abstract}

\section{Introduction}

A well-known problem in combinatorial optimization is \emph{vertex
  cover}: given an undirected graph, select a subset of the vertices
such that every edge is incident to 
at least one of the selected vertices.  The aim is to select as few
vertices as possible. The problem is one of Karp's 21 NP-complete problems \cite{k-rcp-CCC72} and remains
NP-hard even for graphs of maximum degree~3 \cite{ak-apxcr-TCS00}.
Moreover, vertex cover is APX-hard~\cite{ds-havc-AM05} and while it is
straightforward to compute a 2-approximation (take all endpoints of a
maximal matching), the existence of a $(2-\varepsilon)$-approximation
for any $\varepsilon>0$ would contradict the so-called \emph{Unique
  Games Conjecture} \cite{kr-vcmha-JCSS08}.  Vertex cover is the
``book example''
of a fixed-parameter tractable problem.

Note that in vertex cover, a vertex covers \emph{all} its incident edges.  In this paper we alter the problem by restricting the covering
abilities of the vertices.
We assume that the input
graph has a given \emph{rotation system}, that is, for every vertex,
a circular ordering of its incident edges.  In the basic version of
our problem, the \emph{angle cover problem}, at each vertex we can
cover one pair of its incident edges
that are consecutive in the ordering (i.e., form an \emph{angle} at
the vertex), and every edge must be covered.  An example of a planar
graph with a vertex cover and an angle cover is shown in
Fig.~\ref{fig:example}.

\begin{figure}[htb]
  \centering
  \includegraphics{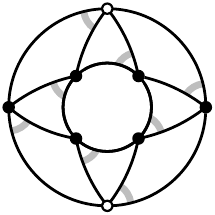}
  \caption{A graph with a minimum vertex cover (black vertices) and an
    angle cover (gray arcs).}
  \label{fig:example}
\end{figure}
In this paper, we mainly treat the decision version of
angle cover, but various optimization versions are interesting as
well; see Sections~\ref{sec:multi-angle-cover} and \ref{sec:wide-angle-cover}.
Clearly, a graph that admits an angle cover cannot have too many
edges, both locally and globally.
We say that a graph~$G$ has \emph{low edge density} if, for all $k$, every $k$-vertex subgraph has at most $2k$ edges.  Observe that only graphs
of low edge density can
have an angle cover.  Whether~$G$ has low edge density can be
easily checked by testing whether the following bipartite auxiliary
graph \Gmat has a matching of size at least~$|E|$.  The graph \Gmat
has one vertex for each edge of~$G$, two vertices for each vertex
of~$G$, and an edge for every pair~$(v,e)$ where~$v$ is a vertex
of~$G$ and~$e$ is an edge of~$G$ incident to~$v$.

Example classes of graphs with low edge density are outerplanar graphs
and maximum-degree-4 graphs (both of which always admit angle 
covers; see below), Laman graphs (graphs such that for all $k$, every $k$-vertex subgraph has at most $2k - 3$ edges), and pointed pseudo-triangulations.
Given a set~$P$ of points in the plane, a
\emph{pseudo-triangulation} is a plane graph with vertex set~$P$ and
straight-line edges that partitions the convex hull of~$P$ into
pseudo-triangles, that is, simple polygons with exactly three convex
angles.  A pseudo-triangulation is \emph{pointed} if the edges incident
to a vertex span an angle less than~$\pi$.
Thus every vertex has one large angle (greater than~$\pi$)
but these angles do not necessarily form an angle cover.
It is known that every pointed pseudo-triangulation is a planar Laman
graph~\cite{Streinu00}
and that every planar Laman graph can be realized as a pointed
pseudo-triangulation~\cite{horsssssw-pmrgp-CGTA05}.
We show that not all Laman graphs admit an angle cover; see
Section~\ref{sec:laman}.

Our interest in angle covers arose from the study of graphs that have isomorphic thickness 2.
The \emph{thickness}, $\theta(G)$, of a graph $G$ is the minimum number of planar graphs whose union is $G$.
By allowing each edge to be a polygonal line with bends, we can draw $G$ on $\theta(G)$ parallel planes where each vertex appears in the same position on each plane and within each plane the edges do not cross~\cite{Kainen73}. 
The \emph{isomorphic thickness}, $\iota(G)$, of a graph $G$ is the minimum number of \emph{isomorphic} planar graphs whose union is $G$.
The \emph{$k$-blowup} of a graph $G = (V,E)$ is the graph $\blowup{k}{G}$ with
$k|V|$ vertices $\bigcup_{a=1}^{k} V_a$ and edges $\bigcup_{1 \leq a,b \leq k} E_{ab}$,
where $V_a = \Set{v_a}{v \in V}$ and $E_{ab} = \Set{(u_a,v_b)}{(u,v) \in E}$. 
As we will show, it is NP-hard to determine if a graph has isomorphic thickness 2, but
all graphs that are the 2-blowup of a plane graph with an angle cover have isomorphic thickness~2.

Given a graph $G=(V,E)$ and a rotation system, the existence of an angle
cover can be expressed by the following integer linear program (ILP)
without objective function.  We use a 0--1 variable~$x_{u,v}$
for every vertex~$u$ and for every vertex~$v$ adjacent to~$u$.  The
intended meaning of $x_{u,v}=1$ is that vertex~$u$ selects edge~$uv$
as one of the two edges of its angle.  We denote the set of vertices
adjacent to a vertex~$u$ by $N(u)$.
\begin{align*}
  \sum_{v \in N(u)} x_{u,v} &\le 2 && \text{for each } u \in V\\
  x_{u,v} + x_{u,w}       &\le 1 && \text{for each $u \in V$ and
  $v,w \in N(u)$ not consecutive around~$u$} \\
  x_{u,v} + x_{v,u}       &\ge 1 && \text{for each } uv \in E\\
  x_{u,v} &\in \{0,1\} &&  \text{for each $u \in V$ and $v \in N(u)$}
\end{align*}
If \Gmat has a perfect matching, we can require equality in the edge
constraints.  The ILP formulation has $2 \cdot |E|$ variables and
$O(\sum_{v \in V} \deg^2(v))$ constraints.

As a warm-up, we observe that every outerplane graph has an angle
cover.  (Recall that an outerplane graph is an outerplanar graph
with a given embedding and, hence, fixed rotation system.)
The statement can be seen as follows.  
Any $n$-vertex outerplane graph has at most $2n-4$ edges,
and outerplanarity is hereditary, so outerplane graphs have low edge
density.  Additionally, every such graph has an \emph{ear
  decomposition}, that is, an ordering $(v_1, v_2, \dots, v_n)$ of the
vertex set $V=\{v_1, v_2, \dots, v_n\}$ such that, for $i=n, n-1,
\dots, 2$, vertex~$v_i$ is incident to at most two edges in
$G[v_1,\dots,v_i]$.  Due to outerplanarity, the ear decomposition can
be chosen such that for each vertex the at most two ``ear edges'' are
consecutive in the ordering around the vertex.  This shows the
existence of an angle cover for any outerplane graph.

\paragraph{Our contribution.}
We first consider a few concrete examples that show that not every
planar graph of low edge density admits an angle cover.
We then show that any graph of maximum degree~4 does admit
an angle cover and give a polytime algorithm to decide if a graph with no degree~3 vertex admits an angle cover (Section~\ref{sec:maxdeg-4}).
Then we prove that, given a graph of maximum degree~5,
it is NP-hard to decide whether it admits an angle cover (Section~\ref{sec:np-hard}).
In Section~\ref{sec:laman}, we show that not all Laman graphs admit an angle cover even though they all satisfy the low edge density requirement.
We also consider two extensions of the
angle cover problem where (i)~every vertex is associated with a
fixed number $a>1$ of angles or (Section~\ref{sec:multi-angle-cover})
(ii)~an angle consists of more than
two consecutive edges (Section~\ref{sec:wide-angle-cover}).
Finally, we show that
the 2-blowup of any plane graph with an angle cover has isomorphic
thickness~2 (Section~\ref{sec:iso}).

\section{Preliminaries and Examples}
\label{sec:examples}

In this section we show that even graphs with several seemingly nice
properties do not always admit an angle cover.  All our examples
are \emph{plane} graphs, that is, they are planar and their rotation
system corresponds to a planar drawing.

\begin{observation}
  There is a plane graph (Fig.~\ref{fig:counterexample-5}) of maximum
  degree~5 and with low edge density that does not admit an angle
  cover.
\end{observation}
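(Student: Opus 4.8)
The plan is to fix the graph $G$ drawn in Figure~\ref{fig:counterexample-5}, together with the rotation system induced by that drawing, and then argue in two stages. The first stage records the ``easy'' properties. That $G$ is plane and has maximum degree~$5$ is read off the figure. For low edge density I would either count directly---exhibit that $G$ has at most $2\,|V(G)|$ edges and note that, because $G$ is small, the same bound survives in every subgraph (one only has to rule out a dense pocket around the high-degree vertices)---or, more mechanically, run the matching test from the introduction: $G$ has low edge density if and only if $\Gmat$ has a matching of size~$|E|$. I would make a point of doing this carefully, since the content of the observation is exactly that the purely numerical obstructions are satisfied and yet no angle cover exists.

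The second stage, which is the real work, is to show $G$ has no angle cover. I would run a forcing argument driven by the rotation system. The basic moves are: (i)~a degree-$2$ vertex has only one angle and therefore claims both of its edges; (ii)~more generally, once all but two of the edges at a vertex~$u$ are known to be claimed at their far endpoints, $u$'s angle is determined; (iii)~a vertex is \emph{barred} from claiming an edge~$e$ once two of its other edges, consecutive in its rotation, are known to be forced onto it; and (iv)~if two edges that are \emph{not} consecutive in the rotation around a vertex are both forced onto it, we have a contradiction. Propagating (i)--(iii) through the gadget around the degree-$5$ vertex (or vertices) of $G$, I expect to reach the following situation: each edge incident to the central degree-$5$ vertex $v$ has its far endpoint barred from claiming it, so every such edge must be claimed by~$v$; but $v$ claims only the two edges of a single angle, so at least three of its five incident edges go uncovered. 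That is the contradiction.

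The step I expect to be the main obstacle is making this chain robust against \emph{all} angle choices at $v$: a degree-$5$ vertex has five candidate angles, and a careless argument would split into five cases. I would avoid that by arranging the argument ``from the outside in'', so that the forced claims along the five spokes are established by the surrounding gadget \emph{before} any commitment is made at $v$---then the tally ``five spokes need $v$, $v$ supplies two'' closes the proof in one shot---or, if the construction is symmetric under rotation of the spokes, by reducing to a single case. A related subtlety to watch is that the forcing must never ``leak'' to a degree-$1$ vertex, which can always claim its own pendant edge and hence breaks no constraint; I would check that every branch of the propagation terminates inside the gadget, at a degree-$2$ vertex or at an already-determined angle, rather than at a leaf.
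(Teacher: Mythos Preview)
Your plan misidentifies where the contradiction lives and overlooks the much shorter argument the construction is designed for. In Figure~\ref{fig:counterexample-5} the central (square) vertex has degree~$4$, not~$5$; the degree-$5$ vertices sit elsewhere and are not where the obstruction bites. So the endgame you sketch---``five spokes need $v$, $v$ supplies two''---does not match the graph, and a purely local forcing argument at a high-degree vertex will not close: even after you propagate from the two degree-$2$ vertices to the centre, all you learn locally is that the centre's angle must re-cover an already-covered edge, which is not by itself a contradiction.

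The paper's proof is instead a one-line global count. The orientation drawn in the figure gives every vertex outdegree~$2$; this simultaneously certifies low edge density (for any subgraph) and shows $|E|=2n$ with $n=21$. Since an angle at each of the $n$ vertices covers at most two edges, an angle cover supplies at most $2n$ edge-coverings; with $|E|=2n$ this forces \emph{every} edge to be covered exactly once. Now look at the centre: the two degree-$2$ vertices, whose angles are forced, already cover the two ``vertical'' edges at the centre, and the centre's own angle---whichever of its four it picks---necessarily contains one of those vertical edges, so one horizontal edge incident to the centre ends up covered twice. That single double-cover means the $n$ vertices cover at most $2n-1$ distinct edges, contradicting $|E|=2n$. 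No case split over the angle at the centre (or anywhere else) is needed, and no propagation out to the degree-$5$ vertices is required.
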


\begin{proof}
  Consider the graph in Fig.~\ref{fig:counterexample-5}.  We have
  oriented its edges so that each vertex has outdegree~2.
  Hence, the graph has low edge density.  The graph has $n=21$
  vertices and $2n$ edges.  Due to the way the two degree-2 vertices
  (filled black) are arranged around the central vertex (square) of
  degree~4, one of the horizontal edges incident to the central vertex
  is covered twice.  This implies that the $n$ vertices cover at most
  $2n-1$ edges.  Thus, there is no angle cover.
\end{proof}

Note that the counterexample critically exploits the use of degree-2
vertices.  Next we show that there are also counterexamples without
such vertices.

\begin{figure}[tb]
  \begin{subfigure}[b]{.42\textwidth}
    \centering
    \includegraphics[scale=.8]{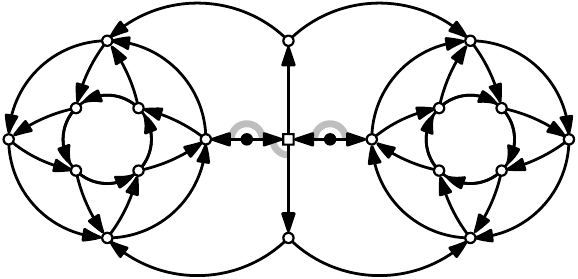}

    \caption{example with vertex degrees 2--5}
    \label{fig:counterexample-5}
  \end{subfigure}
  \hfill
  \begin{subfigure}[b]{.52\textwidth}
    \centering
    \includegraphics[page=2,scale=.8]{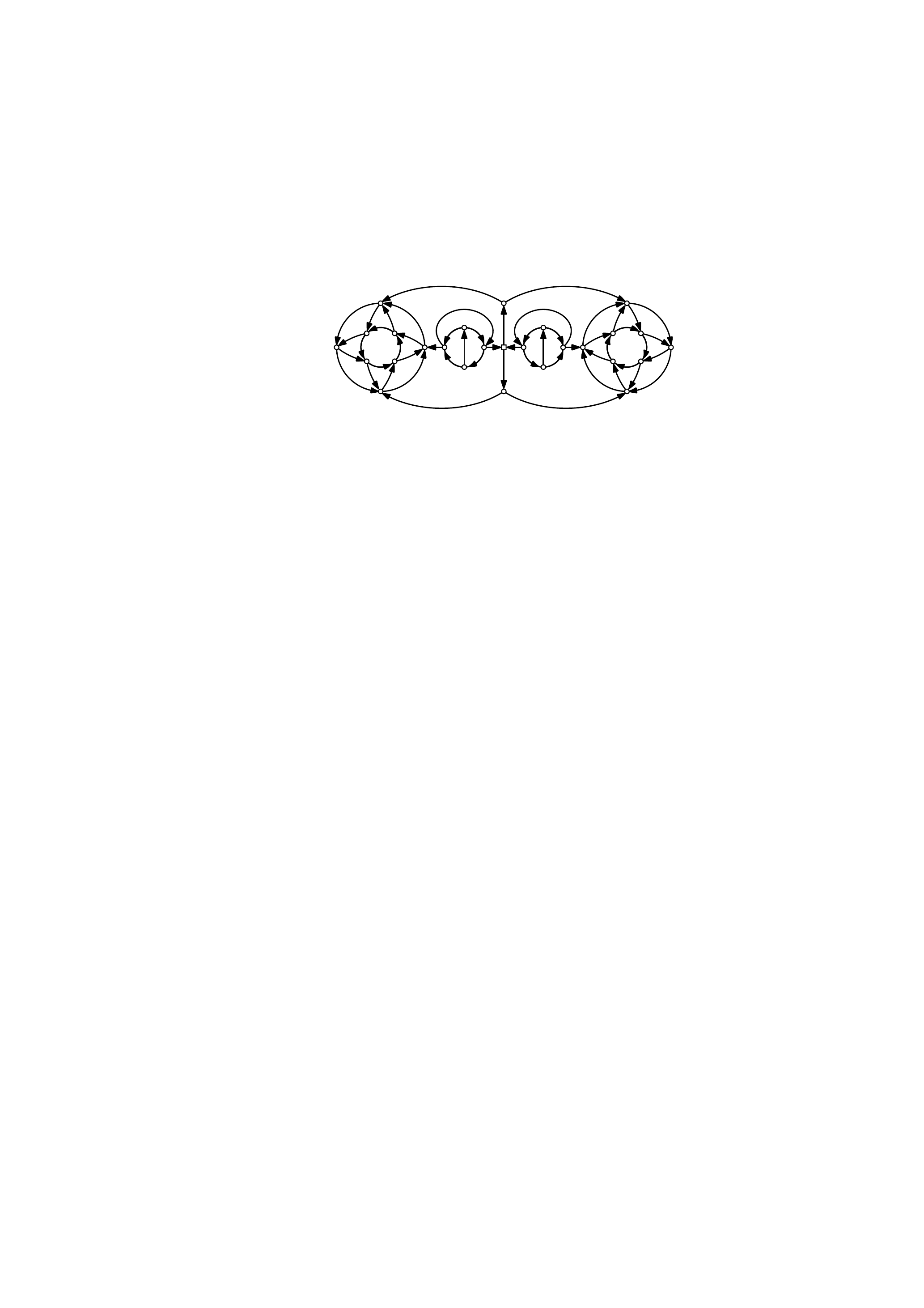}
    \caption{example with vertex degrees 3--5}
    \label{fig:counter345}
  \end{subfigure}
  \caption{Two plane graphs that do not admit angle covers.
    Edges are oriented such that each vertex has outdegree~2 (hence
    both graphs have low edge density).}
\end{figure}

\begin{observation}
  There is a plane graph (Fig.~\ref{fig:counter345}) of low edge
  density with vertex degrees in $\{3,4,5\}$ that does not admit an
  angle cover.
\end{observation}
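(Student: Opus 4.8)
The plan is to exhibit the plane graph~$G$ drawn in Fig.~\ref{fig:counter345} and to verify its three claimed properties directly. That~$G$ is plane and that every vertex has degree~$3$, $4$, or~$5$ can be read off the drawing. For low edge density I would appeal to the orientation shown in the figure, in which every vertex has outdegree exactly~$2$: each edge lying inside a $k$-vertex subgraph~$H$ is charged to the outdegree of one of its endpoints in~$H$, so~$H$ has at most~$2k$ edges; hence~$G$ has low edge density, and in particular~$G$ has~$n$ vertices and exactly~$2n$ edges. (One could instead invoke the \Gmat-matching criterion from the introduction, but the orientation makes the count transparent.)

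Next I would reduce the non-existence of an angle cover to a statement about an ``exact'' cover. In any angle cover each of the~$n$ vertices covers exactly two incident edges, so the cover accounts for exactly~$2n$ incidences (vertex, covered edge); since $|E|=2n$, \emph{every} edge would have to be covered exactly once, i.e.\ for every edge~$uv$ exactly one of~$u,v$ would select it. Thus it suffices to prove that no such exact selection of angles exists: in every choice of angles at the vertices, some edge is covered by both of its endpoints (and hence some edge is not covered at all), so the~$n$ vertices cover at most~$2n-1$ distinct edges, contradicting $|E|=2n$. This is exactly the shape of the argument used for Fig.~\ref{fig:counterexample-5}, and I would mirror it.

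To establish that forced double coverage, I would fix attention on the distinguished degree-$4$ vertex~$c$ (the analogue of the square vertex in Fig.~\ref{fig:counterexample-5}) and on the two degree-$3$ gadget vertices that play the role of the degree-$2$ vertices of the earlier example, and trace how the angle chosen at one vertex constrains the angles at its neighbors. The key local facts are: at a degree-$3$ vertex exactly one incident edge is \emph{not} selected and must therefore be selected by its other endpoint; and the rotation around~$c$ restricts which pairs of incident edges can form~$c$'s angle (non-consecutive edges cannot). Propagating these implications through the gadget vertices into~$c$, I expect that in every case one of the edges incident to~$c$ is forced to be selected from both sides, which gives the wasted incidence. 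I would organize the verification as a finite case analysis indexed by the (at most four) admissible angles at~$c$.

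The main obstacle is this last step. Unlike the degree-$5$ counterexample, there is no degree-$2$ vertex whose unique angle is forced outright, so the contradiction has to be produced by a somewhat longer chain of forced choices running through the degree-$3$ vertices of the gadget before it reaches~$c$; the content of the proof is checking that every branch of the case analysis really does terminate in a doubly covered edge. This is routine once the figure is fixed, but it is where the actual work lies.
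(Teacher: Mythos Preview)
Your setup---reading off low edge density from the outdegree-$2$ orientation and then noting that with $|E|=2n$ an angle cover must cover every edge \emph{exactly} once---matches the paper. The divergence is in the crucial step, and it stems from a wrong guess about the graph. The gadgets that replace the two degree-$2$ vertices of Fig.~\ref{fig:counterexample-5} are not single degree-$3$ vertices but copies of~$K_4$, each attached to the rest of the graph by two edges.

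With this structure the paper avoids any case analysis at the central vertex. It simply counts: a $K_4$ has four vertices and six internal edges, so its four vertices contribute exactly $8$ covered edge-incidences, which are already exhausted by the six internal edges together with the two outgoing edges. Hence in \emph{any} angle cover both outgoing edges of each $K_4$ are covered from inside the block. Each $K_4$ therefore behaves exactly like the degree-$2$ vertex it replaced, and the non-existence of an angle cover follows verbatim from the argument for Fig.~\ref{fig:counterexample-5}: one of the horizontal edges at the central degree-$4$ vertex is forced to be covered twice.

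Your proposed branch-and-propagate case analysis on the admissible angles at~$c$ could in principle be pushed through, but it targets the wrong local objects and is precisely the work the paper's counting shortcut eliminates.
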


\begin{proof}
  Consider the graph in Fig.~\ref{fig:counter345}.  Again, we
  oriented the edges such that every vertex has outdegree~2, which
  shows that the graph has low edge density.  The
  graph is the same as the one in Fig.~\ref{fig:counterexample-5} except we replaced
  the two degree-2 vertices by copies of~$K_4$ (light blue).  Since the number of edges is $2n$ and 
  $K_4$ has four vertices and six edges, the two edges that connect
  each copy of $K_4$ to the rest of the graph, must necessarily
  be directed away from~$K_4$.  As a result, the two copies of~$K_4$
  behave like the degree-2 vertices in
  Fig.~\ref{fig:counterexample-5}: they cover one of the horizontal
  edges incident to the central vertex twice.  Thus,
  there is no angle cover.
\end{proof}

Note that the counterexamples so far were not strongly connected
(w.r.t.\ the chosen edge orientation).  
But strongly connected counterexamples exist, too.

\begin{figure}[tb]
  \centering
  \includegraphics[scale=.8,page=1]{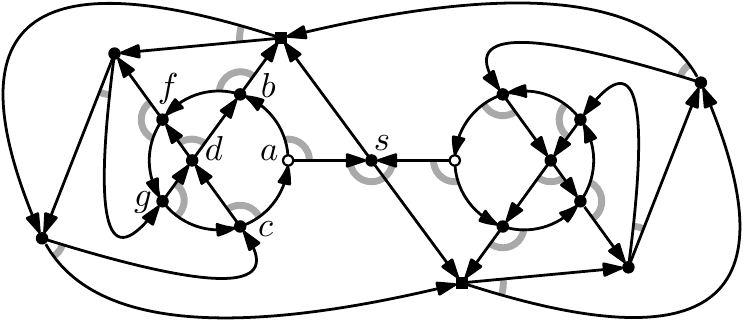}
  \caption{A plane graph of low edge density that has a strongly
    connected orientation of its edges but does not admit an angle
    cover.}
  \label{fig:strong}
\end{figure}

\begin{observation}
  There is a plane graph (Fig.~\ref{fig:strong}) of low edge density
  with vertex degrees in $\{3,4,5\}$ and a strongly connected edge
  orientation that does not admit an angle cover.
\end{observation}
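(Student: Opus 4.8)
The plan is to present the concrete plane graph $G$ depicted in Fig.~\ref{fig:strong}, equipped with the edge orientation drawn there, and then to verify three claims in turn: that $G$ has low edge density, that the drawn orientation is strongly connected, and that $G$ has no angle cover. The first two are direct inspections of the figure; the third reuses, with a twist, the counting obstruction from the proof of the first observation.

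First I would confirm low edge density. Reading off the figure, every vertex has outdegree exactly~$2$ under the drawn orientation, so $|E| = 2|V|$; moreover, for any subgraph $H$ on $k$ vertices, each vertex of $H$ has at most $2$ out-neighbours inside $H$, hence $|E(H)| \le 2k$. Thus $G$ satisfies the low-edge-density condition (equivalently, $\Gmat$ has a matching of size $|E|$). Next I would verify strong connectivity by exhibiting, in the figure, a directed path from a fixed ``hub'' vertex to every other vertex and back; this is precisely the ingredient missing from the earlier two examples, whose gadgets forced their connecting edges to point outward.

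For the absence of an angle cover, the graph is engineered around the same tight local configuration as in Fig.~\ref{fig:counterexample-5}: since $|E| = 2|V| = 2n$ and each of the $n$ vertices can be the chosen angle-vertex for at most two edges, an angle cover would have to assign each vertex its own distinct covered edge. I would then argue, by a short case analysis around a central degree-$4$ vertex together with the structure attached at two of its ports, that in every assignment of angles some edge incident to that central vertex is the angle-edge of two different vertices. Hence at most $2n-1$ edges are covered, one short of $|E|$, so no angle cover exists.

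The main obstacle is the design of the attached structure. In the first observation the ``port'' gadget is a degree-$2$ vertex, which has a unique angle and therefore must cover both its edges itself --- but this same rigidity forces the edge joining it to the hub to point into the hub, and analogously the $K_4$ gadget of the second observation forces its two connecting edges outward, in both cases breaking strong connectivity. So the crux is to find a gadget that (a) still behaves, for angle-covering purposes, like a degree-$2$ vertex sitting at the relevant two ports of the hub, yet (b) admits, within the whole graph, edge directions compatible with strong connectivity (so in particular has only degrees in $\{3,4,5\}$); and then to re-run the outdegree/counting argument on this modified graph to confirm the forced double cover.
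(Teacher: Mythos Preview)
Your handling of low edge density and strong connectivity is fine and matches the paper's (both are argued by inspection of the figure). The substantive divergence is in the third part.

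You frame the non-existence of an angle cover as a \emph{counting} obstruction: assume $|E|=2n$, so every edge must be covered exactly once, and then argue that some edge near the central vertex is necessarily the angle-edge of two vertices. The paper does not do this. It never asserts $|E|=2n$ for Fig.~\ref{fig:strong}, and the contradiction it derives is not ``some edge is double-covered'' but rather ``some vertex is left with two \emph{non-consecutive} uncovered edges'', which is a purely local impossibility independent of any global count. Concretely, the paper names a central vertex $s$, observes that by symmetry $s$ covers its left horizontal edge to a degree-$3$ vertex $a$, which forces $a$'s angle; it then focuses on a degree-$4$ vertex $d$ (adjacent to $a$'s other two neighbours $b,c$ and to two further vertices $f,g$) and does a four-way case split on $d$'s angle. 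In each case the forced choices at $b,c,f,g$ propagate until one of these vertices has two uncovered incident edges that are not adjacent in its rotation---so no angle can cover both.

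There is also a genuine gap in your proposal: the entire content of the proof is the case analysis, and you have deferred it (``by a short case analysis \dots''). Your last paragraph correctly identifies why the earlier gadgets (the degree-$2$ vertex and the $K_4$) are incompatible with strong connectivity, but then speculates about what the replacement gadget ``must'' look like rather than analysing the actual graph. The paper's graph is not the old hub with new port-gadgets glued on; the obstruction lives in a specific $7$-vertex configuration $s,a,b,c,d,f,g$ on one side, and you would need to trace through it explicitly to complete the argument.
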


\begin{proof}
  Consider the graph depicted in Fig.~\ref{fig:strong}.  We assume
  that it has an angle cover and show that this yields a contradiction.
  Clearly, one of the horizontal edges incident to~$s$ must be covered
  by~$s$.  Due to symmetry, we can assume that it is the edge to
  vertex~$a$ on the left.  Since $a$ has degree~3, it must cover its
  other two edges, to vertices~$b$ up and to~$c$ down.  Let~$d$ be the
  vertex adjacent to both~$b$ and~$c$ and let $b,f,g,c$ be the
  neighbors of~$d$ in counterclockwise order.  We have the following
  four cases, each of which leads to a contradiction.
  
  \begin{enumerate}[nosep]
  \item
    $d$ covers $db$ and $dc$: 
    Then $b$ covers $bf$, $f$ covers~$fg$ (since $f$ must cover
    ${fd}$), and $g$ covers~$gc$ (since $g$ must cover $gd$).  But
    then~$c$ has only one edge to cover.

  \item
    $d$ covers $db$ and $df$: 
    Again, $b$ covers~$bf$ and (since only two edges remain uncovered
    at~$f$) $f$ covers~$fg$ and (as before) $g$ covers~$gc$.  But
    then~$c$ has two non-consecutive edges to cover.

  \item
    $d$ covers~$df$ and~$dg$: 
    Then~$b$ must cover~$bd$ and~$bf$, $f$ must cover~$fg$, $g$ must
    cover~$gc$. But then $c$ has two non-consecutive edges to cover.

  \item
    $d$ covers~$dg$ and~$dc$: 
    Then $c$ must cover~$cg$, $g$ must cover~$gf$, $f$ must
    cover~$fb$.  But then $b$ has two non-consecutive edges to
    cover.
  \end{enumerate}
\end{proof}

\begin{observation}
  There is a planar maximum-degree-5 graph
  (Fig.~\ref{fig:embedding-effect}) with two embeddings such that one
  admits an angle cover, but the other does not.
\end{observation}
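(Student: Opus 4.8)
The plan is to exhibit a single planar graph of maximum degree~5 together with two rotation systems (embeddings) of it, argue that one rotation system admits an angle cover, and argue that the other does not. A natural strategy is to take one of the counterexamples already constructed in this section -- say the graph from Observation~\ref{obs:counter345} or Fig.~\ref{fig:strong} -- which is known not to admit an angle cover in its given plane embedding, and then show that the \emph{same underlying abstract graph} has a \emph{different} planar embedding (and hence a different rotation system) in which an angle cover does exist. The key structural point is that the non-existence proofs above are extremely sensitive to the cyclic order of edges around a small number of critical vertices (the central degree-4 vertex, or the vertex $d$ and its neighbours $b,f,g,c$); a local change in that order breaks the chain of forced choices that led to the contradiction.

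First I would fix the underlying graph $G$ and its two embeddings, and display them side by side in Fig.~\ref{fig:embedding-effect}, with the rotation system at each vertex made explicit (for instance by the drawing itself). For the embedding that fails, I would simply cite the relevant earlier Observation, since the only thing that matters in those proofs is the rotation system, not the geometry; alternatively I would repeat the short case analysis if the graph has been modified. For the embedding that succeeds, I would exhibit an explicit angle cover -- i.e.\ name, for every vertex, the pair of consecutive incident edges it selects -- and then verify in one pass over the edges that every edge is covered by at least one of its two endpoints. Since every vertex has degree at most~5, and in fact most vertices in these gadgets have degree~3 or~4, writing down a valid selection is a small finite check; I would present it compactly, perhaps as a table or as an annotated figure, rather than vertex by vertex in prose.

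The step I expect to be the main obstacle is \emph{choosing} the second embedding so that an angle cover provably exists while keeping the graph planar and of maximum degree~5. The difficulty is that low edge density is necessary but not sufficient, so I cannot argue abstractly; I need a concrete embedding and a concrete cover. The cleanest way to avoid a delicate search is to design the example around a graph where one ``bad'' vertex has an incident edge $e$ that, in the good embedding, becomes consecutive with an edge it was previously separated from -- so that a single angle at that vertex now covers $e$ together with the edge that its neighbour cannot help with. Concretely, I would aim for a graph with a small cyclic ``crowded'' subconfiguration (like the $s$--$a$--$b$--$c$--$d$ structure in Fig.~\ref{fig:strong}) and then, in the good embedding, re-route one edge to the outer face so that the forced-choice argument no longer closes up. Once that edge is relocated, the angle cover should follow by the same greedy/ear-style reasoning used for outerplane graphs in the introduction, applied to the (now less constrained) local structure, with the rest of the graph covered by a direct assignment.

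Finally, I would double-check the two invariants that must survive the re-embedding: maximum degree stays at~$5$ (re-routing edges does not change degrees), and the graph stays planar in the new rotation system (this is where I would need to be careful that the new cyclic orders around all vertices are simultaneously realizable in the plane -- i.e.\ that I have specified a genuine planar embedding, not just an arbitrary rotation system). With those two checks in hand, and with the failing side supplied by an earlier Observation and the succeeding side supplied by an explicit cover, the Observation follows.
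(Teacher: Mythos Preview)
The paper offers no written proof of this Observation at all: it is established purely by Fig.~\ref{fig:embedding-effect}, which displays a small purpose-built graph in two planar embeddings, with the angle cover drawn explicitly in the good embedding. So your proposal is already considerably more elaborate than what the paper itself supplies.

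Where your plan diverges is in the choice of graph. You propose to recycle one of the earlier counterexamples (Fig.~\ref{fig:counter345} or Fig.~\ref{fig:strong}) and re-embed it. That is a legitimate idea, and you are right that the non-existence arguments there depend only on the cyclic order at a handful of vertices. But you also correctly identify the obstacle and then do not resolve it: you must produce a \emph{second planar} embedding of that same graph in which an angle cover exists, and you never do. Changing the rotation at the central vertex (or at~$d$) is not free---the rest of the graph constrains which rotation systems are realizable in the plane, and these examples were built to be fairly rigid. Your fallback, ``design the example around a graph with a small crowded subconfiguration and re-route one edge,'' is essentially what the paper does, but again you stop at the design principle rather than producing the example.

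In short: your strategy is sound and your diagnosis of the difficulty is accurate, but the proposal remains a plan rather than a proof. To match the paper you need only exhibit one concrete small graph with two drawn embeddings and one explicit angle assignment; to do better than the paper you could carry out your reuse idea, but then you owe the reader the actual second embedding and a verification that it is planar.
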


Thus, when determining whether a graph (of maximum degree greater than~4)
has an angle cover, we must consider a particular embedding, which determines a rotation system.
This applies to non-planar graphs as well.
However, if we have a topological embedding of a non-planar graph, we can decide whether it has an angle cover by considering its planarization.
By a \emph{topological graph} we mean a graph together with a drawing
of that graph where any pair of edges (including their endpoints) has
at most one (crossing not touching) point in common and any point of
the plane is contained in at most two edges.  By the
\emph{planarization} of a topological graph we mean the plane graph
that we get if we replace, one by one, in arbitrary order, each
crossing by a new vertex that is incident exactly to the four
pieces of the two edges that defined the crossing.  We define the
order of the four new edges around the new vertex to be the same as
the order of the four endpoints of the old edges around the crossing.

\begin{proposition}
  \label{prop:topological}
  Any topological graph admits an angle cover if and only if its
  planarization admits an angle cover.
\end{proposition}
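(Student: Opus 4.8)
The plan is to prove both directions by giving explicit translations between angle covers of a topological graph $G$ and angle covers of its planarization $G'$. The key observation is that planarizing one crossing at a time is a local operation: a crossing of edges $e = uv$ and $e' = u'v'$ is replaced by a degree-4 vertex $x$ whose four incident edge-pieces appear, in rotation around $x$, in the cyclic order $u, u', v, v'$ (alternating between the two original edges). So it suffices to handle a single crossing and then argue by induction on the number of crossings; at each step the rotation system at all vertices other than the new one is inherited unchanged, and the new vertex gets the prescribed rotation.

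First I would prove the forward direction: given an angle cover of $G$, construct one of $G'$. The only subtlety is that an edge $e = uv$ of $G$ that is crossed becomes a path $u \,\text{--}\, x_1 \,\text{--}\, \cdots \,\text{--}\, x_k \,\text{--}\, v$ in $G'$, and every piece must be covered. At each crossing vertex $x_i$, the two pieces of $e$ passing through $x_i$ are \emph{not} consecutive in the rotation (the two pieces of $e'$ separate them), so $x_i$ cannot, by itself, cover both pieces of $e$. Instead I would have $x_i$ select the angle consisting of one piece of $e$ and one piece of $e'$ — but this must be done consistently along both edges. The clean way: orient each crossed edge arbitrarily; at each crossing vertex $x$ of edges $e$ and $e'$, let $x$ cover the angle formed by the \emph{outgoing} piece of $e$ and the \emph{outgoing} piece of $e'$ (these are consecutive in the rotation, since the cyclic order alternates $e, e', e, e'$ — any piece of $e$ is adjacent in the rotation to a piece of $e'$). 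Then along edge $e$, each internal piece $x_{i-1}x_i$ is the incoming piece at $x_i$ and the outgoing piece at $x_{i-1}$, hence covered by $x_{i-1}$; the first piece $ux_1$ and last piece $x_k v$ need attention. Here I use the original angle cover of $G$: $e = uv$ was covered in $G$, say by $u$; then I let $u$ cover the same angle in $G'$ (the rotation at $u$ is unchanged, and the piece $ux_1$ plays the role of $e$), which covers the first piece; symmetrically if $v$ covered it. The piece at the other end — if $u$ covered $e$, then $x_k v$ — must then be covered by $x_k$ as its outgoing piece, so one should orient $e$ from $v$ toward $u$, i.e.\ \emph{toward} the endpoint that covered it in $G$. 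With this orientation rule every piece of every crossed edge is covered, uncrossed edges and their covering are inherited verbatim, and every vertex selects at most one angle.

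For the converse, given an angle cover of $G'$, I would contract each subdivided edge back to a single edge and show the restriction to the original vertices is an angle cover of $G$. A vertex $v$ of $G$ has the same rotation in $G'$; if it selected an angle in $G'$, that same angle (after identifying each incident piece with the original edge it belongs to) is a valid angle at $v$ in $G$. An edge $e = uv$ of $G$ that was crossed is covered in $G$ iff at least one of its pieces in $G'$ was covered by one of its \emph{original-vertex} endpoints $u$ or $v$ (covering by a crossing vertex $x_i$ does not help, since there $e$'s piece is paired with a piece of the \emph{other} edge). So the real content is: \emph{some} piece of each crossed edge is covered by one of its two endpoints. This is where I expect the main obstacle — it is conceivable that along a long subdivided edge, every piece is covered only by crossing vertices, leaving $e$ uncovered in $G$. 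I would rule this out by a counting/parity argument along the path $u, x_1, \ldots, x_k, v$: at each $x_i$ the selected angle (if any) covers at most one piece of $e$, and by a discharging argument over all the crossings on $e$ together with the two endpoints — using that there are $k+1$ pieces, each crossing vertex supplies at most one unit of coverage to $e$, and the piece-coverage must total at least $k+1$ — one piece must be covered by $u$ or $v$. Making this argument airtight is the crux; the rest is bookkeeping. Finally, I would note both constructions compose over multiple crossings since each is purely local at the crossing being resolved, completing the induction and hence the proof.
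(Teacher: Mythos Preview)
Your proof is correct in substance, and you even announce the paper's strategy --- handle one crossing and induct --- but then you do not actually follow it: you argue directly for the full planarization, which makes both directions more involved than necessary.

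The paper's proof treats a single crossing of $e=uv$ and $f=xy$ replaced by a new degree-4 vertex $w$. Forward: $e$ and $f$ were covered in $G$ by, say, $u$ and $x$; set $\alpha(w)=\{wv,wy\}$. Backward: $w$'s angle consists of one piece of $e$ and one piece of $f$; the remaining piece of $e$ (say $uw$) must then be covered by $u$, and likewise $xw$ by $x$; restricting to $G$ gives an angle cover. That is the entire argument.

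Your all-at-once version does work, with one correction. In the forward direction your orientation scheme is the right idea, but you state the direction backwards: you must orient $e$ \emph{away from} the endpoint that covered it in $G$, so that the first piece is handled by that endpoint and every later piece is the outgoing piece at the preceding crossing vertex. (Your sentence ``each internal piece $x_{i-1}x_i$ is \dots\ the outgoing piece at $x_{i-1}$'' already uses this correct orientation; the later ``orient $e$ from $v$ toward $u$'' contradicts it.) In the backward direction your counting argument --- $k+1$ pieces of $e$, each of the $k$ crossing vertices covers at most one of them since the two pieces of $e$ through it are non-adjacent in the rotation, hence at least one piece is covered by $u$ or $v$, and that piece is necessarily the end-piece incident to that vertex --- is correct and needs no further ``airtightening''. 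But all of this collapses to the paper's two-line argument once you actually carry out the single-crossing induction you proposed at the start.
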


\begin{proof}
  We show the equivalence for the first step of the planarization
  procedure defined above.  Then, induction proves our claim.

  Let $G$ be a topological graph, and let~$G'$ be the graph that we
  obtain from~$G$ by replacing an arbitrary crossing of two
  edges~$e=uv$ and~$f=xy$ by a new vertex~$w$ that is incident to~$u$,
  $v$, $x$, and~$y$; see Fig.~\ref{fig:topological}.  (Let the order
  of the endpoints around the crossing in~$G$ and around the new
  vertex in~$G$ be $\langle u,x,v,y \rangle$.)

  \begin{figure}[tb]
    \begin{minipage}[t]{.505\textwidth}
      \centering
      \includegraphics[scale=0.6]{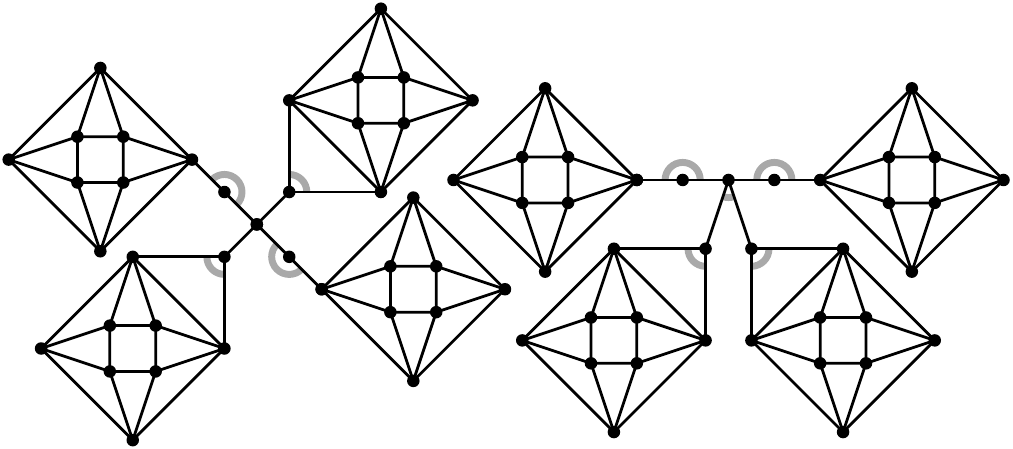}
      \caption{A graph with two embeddings; one without and one with an
        angle cover.}
      \label{fig:embedding-effect}
    \end{minipage}
    \hfill
    \begin{minipage}[t]{.46\textwidth}
      \centering
      \includegraphics{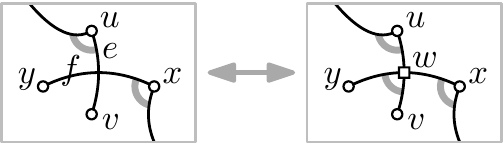}
      \caption{Any topological graph (left) admits an angle cover
        if and only if its planarization (right) admits an angle
        cover.}
      \label{fig:topological}
    \end{minipage}
  \end{figure}

  Suppose that~$G$ has an angle cover~$\alpha$.  Edges~$e$ and~$f$,
  say, must be covered by angles incident to vertices~$u$
  and~$x$.  Then it is simple to extend~$\alpha$ to~$G'$ by mapping~$w$
  to the angle $\{wv,wy\}$ incident to~$w$.

  Now suppose that $G'$ has an angle cover~$\alpha'$ with, say,
  $\alpha'(w)=\{vw,yw\}$.  Since~$w$ does not cover~$uw$ and~$xw$, $u$
  must cover~$uw$ and $x$ must cover~$xw$.  Now we restrict $\alpha'$
  to~$G$: we replace~$uw$ by~$uv$ and~$xw$ by~$xy$.  Hence, both~$uv$
  and~$xy$ are covered.  Finally, we remove~$w$ (with~$vw$ and~$yw$).
  Clearly, the resulting map is an angle cover for~$G$.
\end{proof}

\section{Algorithms for Graphs with Restricted Degrees}
\label{sec:maxdeg-4}

\begin{theorem}
  \label{thm:runtime}
  Any graph of maximum degree~4 with any rotation system admits an
  angle cover, and such a cover can be found in linear time.
\end{theorem}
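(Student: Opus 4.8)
The plan is to reformulate ``admits an angle cover'' as a degree-constrained orientation problem in an auxiliary graph and to solve that by a Hall-type argument (which also yields the linear-time bound).

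The crucial first step concerns degree-4 vertices. Fix such a vertex~$v$ with incident edges $e_0,e_1,e_2,e_3$ in rotation order. Each of its four angles, $\{e_0,e_1\}$, $\{e_1,e_2\}$, $\{e_2,e_3\}$, $\{e_3,e_0\}$, contains exactly one edge of the \emph{opposite pair} $\{e_0,e_2\}$ and exactly one of $\{e_1,e_3\}$; conversely, any choice of one edge from $\{e_0,e_2\}$ together with one from $\{e_1,e_3\}$ is a consecutive pair. So choosing an angle at~$v$ is exactly the same as choosing, independently within each opposite pair, which edge~$v$ \emph{discards} (does not cover). At a degree-3 vertex every pair of its incident edges is consecutive, so an angle there amounts to discarding a single arbitrary incident edge; a vertex of degree~$\le 2$ covers all its (at most two) incident edges, i.e.\ discards nothing. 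Hence an angle cover of~$G$ is exactly an assignment of a discard set at every vertex (one edge per opposite pair at a degree-4 vertex, exactly one edge at a degree-3 vertex, nothing at a vertex of degree~$\le 2$) such that no edge is discarded by both of its endpoints; vertices of degree~$0$ or~$1$ impose nothing and can be ignored.

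Next I would build an auxiliary graph~$H$ on which this becomes an orientation problem. Put one node for each opposite pair of each degree-4 vertex and one node for each vertex of degree~$2$ or~$3$; the edges of~$H$ are the edges of~$G$, where an edge~$e$ is attached at an endpoint~$v$ to the node of~$v$ (if $\deg v\le 3$) or to the node of the opposite pair at~$v$ containing~$e$ (if $\deg v=4$). Call a node \emph{active} if its vertex (or pair) must discard exactly one incident edge, i.e.\ the pair-nodes and the degree-3 nodes, and \emph{passive} otherwise. A valid discard assignment then corresponds precisely to orienting some subset of~$E(H)$ so that every active node has out-degree exactly~$1$ and every passive node out-degree~$0$ (``discard an edge at a node'' $=$ ``orient it out of that node''); that no edge is discarded twice is automatic, since an oriented edge points out of at most one endpoint.

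Finally, such an orientation always exists and can be produced in linear time. Every active node has degree at least~$2$ in~$H$ (a pair-node has degree exactly~$2$, a degree-3 node degree~$3$), so in the bipartite incidence graph between active nodes and edges of~$H$ any set~$S$ of active nodes touches at least $\tfrac12\sum_{v\in S}\deg_H(v)\ge|S|$ edges~-- this is Hall's condition, so the orientation exists. For the algorithm I would build it directly, one connected component of~$H$ at a time: if the component contains a passive node, root a spanning tree at such a node and let every active node (necessarily a non-root node) discard the $G$-edge of its parent edge; if the component is entirely active it has minimum degree at least~$2$ and hence contains a cycle, so take a spanning tree plus one extra edge, orient its unique cycle consistently, and orient the remaining tree edges toward that cycle~-- now every node has out-degree exactly~$1$. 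Translating back gives the covered angle at each vertex, and building~$H$, the spanning forests, and the cover is all linear. I expect the one conceptual hurdle to be the opposite-pair observation of the first step, which removes the awkward consecutiveness constraint; after that the only care needed is the bookkeeping forced by a degree-4 vertex splitting into two nodes of~$H$, in particular that the spanning-tree root in the construction must be \emph{passive}, so that no active pair-node is left without its discard.
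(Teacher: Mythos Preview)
Your argument is correct but follows a genuinely different route from the paper's. The paper's proof is shorter and more direct: it first augments~$G$ to a 4\nobreakdash-regular multigraph by adding dummy edges and self\nobreakdash-loops, then walks an Eulerian circuit with the single rule ``always leave a vertex on the edge opposite (in the rotation) to the one you entered on''. This forces the two passages through each vertex to cross rather than touch, so the two outgoing edges are automatically consecutive and serve as the angle. Your approach instead extracts the structural fact that underlies this~--- at a degree\nobreakdash-4 vertex, choosing an angle is the same as choosing one edge to discard from each opposite pair~--- and converts the whole question into a system\nobreakdash-of\nobreakdash-distinct\nobreakdash-representatives / orientation problem on the auxiliary graph~$H$, solved by Hall's condition for existence and by a spanning\nobreakdash-tree (or unicyclic) orientation for the linear\nobreakdash-time construction. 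The paper's Eulerian trick is slicker and is reused for the multi\nobreakdash-angle result (Theorem~\ref{thm:AlgForAangles}); your reduction is more explicit about the matching structure and avoids the augmentation step, at the price of more bookkeeping with the two pair\nobreakdash-nodes per degree\nobreakdash-4 vertex.

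One small wrinkle to tidy: you say vertices of degree~$1$ ``can be ignored'', but you also declare the edges of~$H$ to be the edges of~$G$, so an edge with a degree\nobreakdash-1 endpoint needs somewhere to attach on that side. Simply treating degree\nobreakdash-1 vertices as additional passive nodes (they discard nothing) makes~$H$ well\nobreakdash-defined and leaves the rest of your argument unchanged.
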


\begin{proof}
  We can assume that the given graph is connected since we can treat
  each connected component independently.  If the given graph is not
  4-regular, we arbitrarily add dummy edges between vertices of degree
  less than~4 until the resulting (multi)graph is 4-regular or there
  is a single vertex, say $v$, of degree less than~4.  If one vertex
  remains with degree less than~$4$, we add self-loops to that vertex
  until it has degree~$4$.  This is always possible since all other
  vertices have degree~$4$, which is even, hence the last vertex must
  also have even degree.  An angle cover in the new graph implies an
  angle cover in the original, where the assigned angle at a vertex in
  the original graph is the angle that contains the assigned angle at
  the same vertex in the new graph.

  We find a collection of directed cycles in the now 4-regular
  graph, similarly to the algorithm for finding an Eulerian cycle.  We
  follow the rule to exit a degree-4 vertex always on the opposite
  edge from where we enter it.  Whenever we close a cycle and there
  are still edges that we have not traversed yet, we start a new cycle
  from one of these edges.  In this way we never visit an edge of the
  input graph twice, which establishes the linear running time.  

  The algorithm yields a partition of the edge set into (directed)
  cycles with the additional property that pairs of cycles may cross
  each other (or themselves), but they never touch without crossing.
  Hence, in every
  vertex the two outgoing edges are always consecutive in the circular
  ordering around the vertex.  We assign to each vertex the
  angle formed by this pair of edges.
\end{proof}

\begin{theorem}
  \label{thm:nodegree3}
  The angle cover problem for $n$-vertex graphs with no vertices of degree~3
  and any rotation system
  can be solved in $O(n^2)$ time.
\end{theorem}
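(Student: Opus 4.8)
The plan is to reduce the problem to 2-satisfiability. For every vertex $u$ and every edge $e$ incident to $u$ I would introduce a Boolean variable $x_{u,e}$, with the intended meaning ``the angle chosen at $u$ contains $e$''. Two families of clauses are imposed. First, a \emph{coverage clause} $x_{u,e}\lor x_{v,e}$ for every edge $e=uv$, saying that $e$ is covered by at least one of its endpoints. Second, for every vertex $u$ and every pair $e,e'$ of edges incident to $u$ that are \emph{not} consecutive in the rotation at $u$, a \emph{locality clause} $\lnot x_{u,e}\lor\lnot x_{u,e'}$, saying that the set of edges credited to $u$ never contains two non-consecutive edges. The algorithm builds this instance and runs a linear-time 2-SAT solver; if it is satisfiable, it also reads off an angle cover, and otherwise reports that none exists.

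Next I would prove that the instance is satisfiable if and only if $G$ has an angle cover. One direction is immediate: from an angle cover $A$, set $x_{u,e}=1$ exactly when $e\in A(u)$; the coverage clauses hold because $A$ is a cover, and the locality clauses hold because $A(u)$ is a pair of consecutive edges (or has at most one edge if $\deg u\le1$). For the converse, given a satisfying assignment, let $C_u=\{e: x_{u,e}=1\}$. This is where the hypothesis is used: the edges incident to $u$ in rotation order form a cycle $C_{\deg u}$, so ``consecutive'' means ``adjacent in $C_{\deg u}$'', and an angle at $u$ is an edge of this cycle. The locality clauses force $C_u$ to contain no two non-consecutive edges; since $C_{\deg u}$ is triangle-free whenever $\deg u\neq 3$ (and the case $\deg u\le2$ is trivial), $C_u$ cannot contain three pairwise-consecutive edges, hence $|C_u|\le2$ and, if $|C_u|=2$, the two edges are consecutive. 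Thus $C_u$ is contained in some angle $A(u)$ of $u$; choosing such an $A(u)$ at every vertex (arbitrarily among the admissible ones) yields a map, and the coverage clauses guarantee that every edge $e=uv$ lies in $C_u\subseteq A(u)$ or in $C_v\subseteq A(v)$, so $A$ is an angle cover.

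For the running time, I would first reject immediately (no angle cover) if $G$ has more than $2n$ edges, since then the whole graph already violates low edge density; counting edges up to $2n+1$ takes $O(n)$ time. Otherwise $\sum_u\deg u=2|E|\le 4n$, so the number of locality clauses is at most $\sum_u\binom{\deg u}{2}\le\tfrac12\bigl(\sum_u\deg u\bigr)^2=O(n^2)$, while the number of variables and coverage clauses is $O(n)$; building the formula and solving the 2-SAT instance via the implication-graph algorithm therefore takes $O(n^2)$ time overall.

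The step I expect to require the most care is the converse direction, and in particular making explicit \emph{why} the exclusion of degree-3 vertices is essential rather than merely convenient: a degree-3 vertex $u$ has $C_3$ as its rotation cycle, in which every pair of incident edges is consecutive, so $u$ contributes no locality clauses whatsoever; a satisfying assignment could then set all three of $x_{u,e_1},x_{u,e_2},x_{u,e_3}$ to $1$, ``covering'' all three edges at $u$ even though no single angle at $u$ does, and this cannot be repaired. Spelling this out, and verifying the boundary cases $\deg u\in\{0,1,2\}$ where the unique (or empty) angle makes the expansion argument go through trivially, is the main obstacle; the rest is routine.
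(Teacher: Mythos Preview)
Your proposal is correct and follows essentially the same approach as the paper: reduce to 2-SAT with the same variables, the same coverage clauses $x_{u,e}\lor x_{v,e}$, and the same locality clauses $\lnot x_{u,e}\lor\lnot x_{u,e'}$ for non-consecutive pairs, then invoke linear-time 2-SAT and the edge bound $|E|\le 2n$ to obtain $O(n^2)$. Your write-up is actually more explicit than the paper's on the key point---the paper simply asserts that the locality clauses force at most two true variables at each vertex, whereas you spell out that $C_u$ must be a clique in the rotation cycle $C_{\deg u}$, which is triangle-free precisely when $\deg u\neq 3$---and you also make the degree-3 obstruction explicit, which the paper leaves implicit.
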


\begin{proof}
  Given a graph $G=(V,E)$ with no vertices of degree~3, at most $2n$ edges, and a corresponding rotation system,
  we create a 2-SAT instance in conjunctive normal form.
  For each vertex $v$ and edge $e$ adjacent to $v$, create a variable $x_{ve}$.
  In our potential assignments, the variable $x_{ve}$ is true
  if and only if in a corresponding
  angle cover, the edge $e$ is covered by the angle at the vertex $v$.
  For each edge $e=(u,v)$, add a clause $(x_{ve}\lor x_{ue})$,
  where the clause is true if and only if the edge is covered.
  For each vertex $v$ with incident edges $e_1,e_2,\dots,e_d$ ($d>3$),
  create clauses $(\neg x_{ve_i}\lor\neg x_{ve_j})$ for any pair $e_i,e_j$ that are
  not adjacent in the circular ordering around $v$.
  This guarantees that there can be at most $2$ true variables for $v$, and that they must be adjacent
  in the circular ordering.
  Furthermore, given any satisfying assignment,
  we can force $v$ to have exactly $2$ such variables,
  which then specify an angle for $v$ in an angle cover.
  A vertex $v$ with degree $\leq2$ can always cover all of its edges.
  Thus, $G$ with its rotation system has an angle cover if and only if the
  constructed 2-SAT instance is satisfiable.
  The number of clauses is bounded above by $|E|+|E|^2$ where $|E| \in
  O(n)$. Since 2-SAT can be solved in linear time \cite{Even1976}, the
  algorithm takes $O(n^2)$ time.
\end{proof}

\section{NP-Hardness for Graphs of Maximum Degree~5}
\label{sec:np-hard}

\begin{theorem}
  \label{thm:np-hard}
  The angle cover problem is NP-hard even for graphs of
  maximum degree~5.
\end{theorem}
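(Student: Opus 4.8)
The plan is to reduce from a known NP-hard problem via a gadget construction, keeping every vertex of degree at most~5. A natural source is 3-SAT or, better, a restricted CNF-SAT variant where each variable occurs a bounded number of times (so that variable gadgets stay low-degree), or perhaps a constraint-satisfaction problem on planar-like structures. The core idea is that an angle cover forces a \emph{binary choice} at degree-4 vertices: by Theorem~\ref{thm:runtime} such vertices can always be covered, and when the two selected edges must be "opposite" (as in the cycle-partition argument) versus "adjacent," this dichotomy can encode a Boolean value. Degree-5 vertices, by contrast, are the ones that can genuinely \emph{fail} to be coverable when attached to a tight subgraph of edge density exactly~$2$ — this is exactly the phenomenon exploited in the counterexamples of Observations~1--3 (one horizontal edge at a square vertex gets double-covered, wasting capacity). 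So the reduction should route "truth" along paths/cycles of degree-4 vertices and use degree-5 vertices inside clause gadgets that are satisfiable precisely when at least one incident literal-path delivers a free covering.

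Concretely, I would build: (i) a \textbf{variable gadget} that, in any angle cover, admits exactly two states — say a cycle through degree-4 vertices whose "angle directions" are globally consistent, giving a clean true/false bit that can be read off at several "output" locations (one per occurrence of the variable); (ii) a \textbf{literal/wire gadget} propagating this bit without loss, using the opposite-edge-forcing behaviour at degree-4 vertices so that a wire carrying "true" donates an extra covered edge at its far end while a "false" wire does not; and (iii) a \textbf{clause gadget} on a small cluster including degree-5 vertices, with total edge count equal to exactly twice the vertex count (a "tight" block like the ones in the counterexamples), so that the block is angle-coverable if and only if at least one of its incident literal-wires arrives in the "true" state supplying the one extra unit of covering capacity the block needs. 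I would also need a \textbf{crossing gadget} (to make the wiring layout realizable, since the instance need not be planar — though Proposition~\ref{prop:topological} lets us even work with a topological drawing and planarize, so crossings are essentially free, but the planarization must preserve the max-degree-5 bound, which it does since crossing vertices have degree~4).

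The \textbf{main obstacle} I expect is the gadget-correctness bookkeeping: proving that the variable gadget has \emph{exactly} the two intended states (no "leakage" assignments that cheat the encoding), that wires cannot silently flip a bit, and — most delicate — that a clause block is coverable \emph{iff} at least one literal is true, i.e. that a single satisfied literal suffices and that no satisfied literal makes it infeasible. This requires a careful capacity/counting argument at the tight block (each of its $k$ vertices covers $\le 2$ edges, it has $2k$ edges, equality is forced, and then one must trace which edge would be double-covered in the all-false case and show an incoming "true" wire repairs exactly that deficit). A secondary concern is keeping degrees $\le 5$ at the junctions where wires meet variable and clause gadgets; I would budget degrees explicitly (e.g., a wire contributes degree $\le 1$ or $2$ at each endpoint) and, if a junction threatens degree~6, split it using an auxiliary degree-4 relay vertex. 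Once the three gadget lemmas are in place, the reduction is clearly polynomial, and NP-hardness follows; NP-membership is immediate (an angle cover is a polynomial-size certificate checkable in linear time), so the problem is in fact NP-complete for max degree~5.
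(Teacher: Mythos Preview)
Your proposal is a plan, not a proof: you outline the shape of a SAT reduction (variable/wire/clause gadgets) but do not actually construct any of the gadgets, and you yourself flag the clause gadget as the unresolved obstacle. The counting heuristic you give for the clause block (``$k$ vertices, $2k$ edges, hence tight'') does not by itself yield an ``iff at least one literal is true'' device---every $4$-regular graph is such a block and always has an angle cover by Theorem~\ref{thm:runtime}. You would need an explicit degree-$5$ configuration whose coverability toggles on a single external edge, and none is exhibited. A smaller point: your description of degree-$4$ vertices as making a binary ``opposite vs.\ adjacent'' choice is off; a degree-$4$ vertex has four possible angles, and a genuine two-state behaviour only appears once some incident edge has already been forced into the angle by an external constraint.

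The paper takes a different and cleaner route: it reduces from graph $3$-colouring rather than SAT. Each vertex $v$ of the input graph becomes a gadget whose centre $c(v)$ has degree exactly~$3$; the one edge that $c(v)$ fails to cover selects the colour of~$v$. The three edges at $c(v)$ lead into three paths (one per colour), and an edge $(u,v)$ of the input graph becomes three cross-edges, one per colour, linking the corresponding path vertices in the $u$- and $v$-gadgets. Degree-$1$ ``separator'' pendants around each path vertex (raising its degree to~$5$) are positioned so that the angle at that vertex cannot simultaneously cover the edge back toward $c(v)$ and the cross-edge. Correctness is then a short propagation argument along a colour-path: if $u$ and $v$ receive the same colour~$k$, the colour-$k$ cross-edge is left uncovered at both ends. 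There is no clause gadget, no wire, and no crossing issue; the degree-$3$ centre vertex gives the ternary choice for free, which is exactly the structural shortcut your SAT plan lacks.
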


\begin{proof}
  We reduce from 3-colouring.  Given a graph $G=(V,E)$, we construct a
  graph $H=(U,F)$ with a rotation system $f \colon U \to F^*$ such
  that $(H,f)$ admits an angle cover if and only if $G$ has a
  3-colouring.  Note that $f$ is a function that, given a vertex $u$,
  will provide a circular order of the edges around $u$.

  For each vertex $v \in V$, let $E_1(v),\dots,E_{\deg(v)}(v)$
  be its adjacent edges in some arbitrary order.
  We create a graph, called a \emph{gadget}, for vertex $v$ that
  contains $1 +  9 \deg(v)$ vertices.
  The centre of the gadget is a vertex $c(v)$ that is adjacent to
  the first vertex in three paths, $e^k_1(v)$, $e^k_2(v), \dots,
  e^k_{\deg(v)}(v)$, one for each of the three colours $k \in
  \{0,1,2\}$.
  Each vertex $e^k_j(v)$, for $j=1,\dots,\deg(v)$, is adjacent to two
  degree-1 vertices $a^k_j(v)$ and $b^k_j(v)$ (as well as its
  neighbours in the path) that are part of the gadget
  (see Fig.~\ref{fig:np-gadget}).
  In addition, if $E_i(u) = E_j(v)$, that is, $(u,v)$ is an edge in
  $G$ and is the $i$th 
  edge adjacent to $u$ and the $j$th edge adjacent to $v$, then the
  vertex $e^k_j(v)$ is adjacent to $e^k_i(u)$
  (see Fig.~\ref{fig:np-between-gadgets}).
  The circular order of edges around $e^k_j(v)$ is $[e^k_{j-1}(v), a^k_j(v),
  e^k_i(u), e^k_{j+1}(v), b^k_j(v)]$, where $e^k_{j-1}(v)$ is $c(v)$
  if $j=1$ and $e^k_{j+1}(v)$ does not exist if $j=\deg(v)$.
  The \emph{separator} edges $(e^k_j(v),a^k_j(v))$ and
  $(e^k_j(v),b^k_j(v))$ prevent an angle cover at $e^k_j(v)$ from
  (i) covering both $(e^k_j(v), e^k_{j-1}(v))$ and
  $(e^k_j(v), e^k_{j+1}(v))$
  or (ii) covering both $(e^k_j(v), e^k_{j-1}(v))$ and
  $(e^k_j(v), e^k_i(u))$.
  The graph containing all of the gadgets and the edges between them
  along with the specified rotation system is then $(H=(U,F),f)$.
  Observe that the maximum degree of~$H$ is~5 and that the construction
  takes polynomial time.

  \begin{figure}[tb]
    \begin{minipage}[t]{0.55\textwidth}
      \centering
      \includegraphics[page=4,scale=.76]{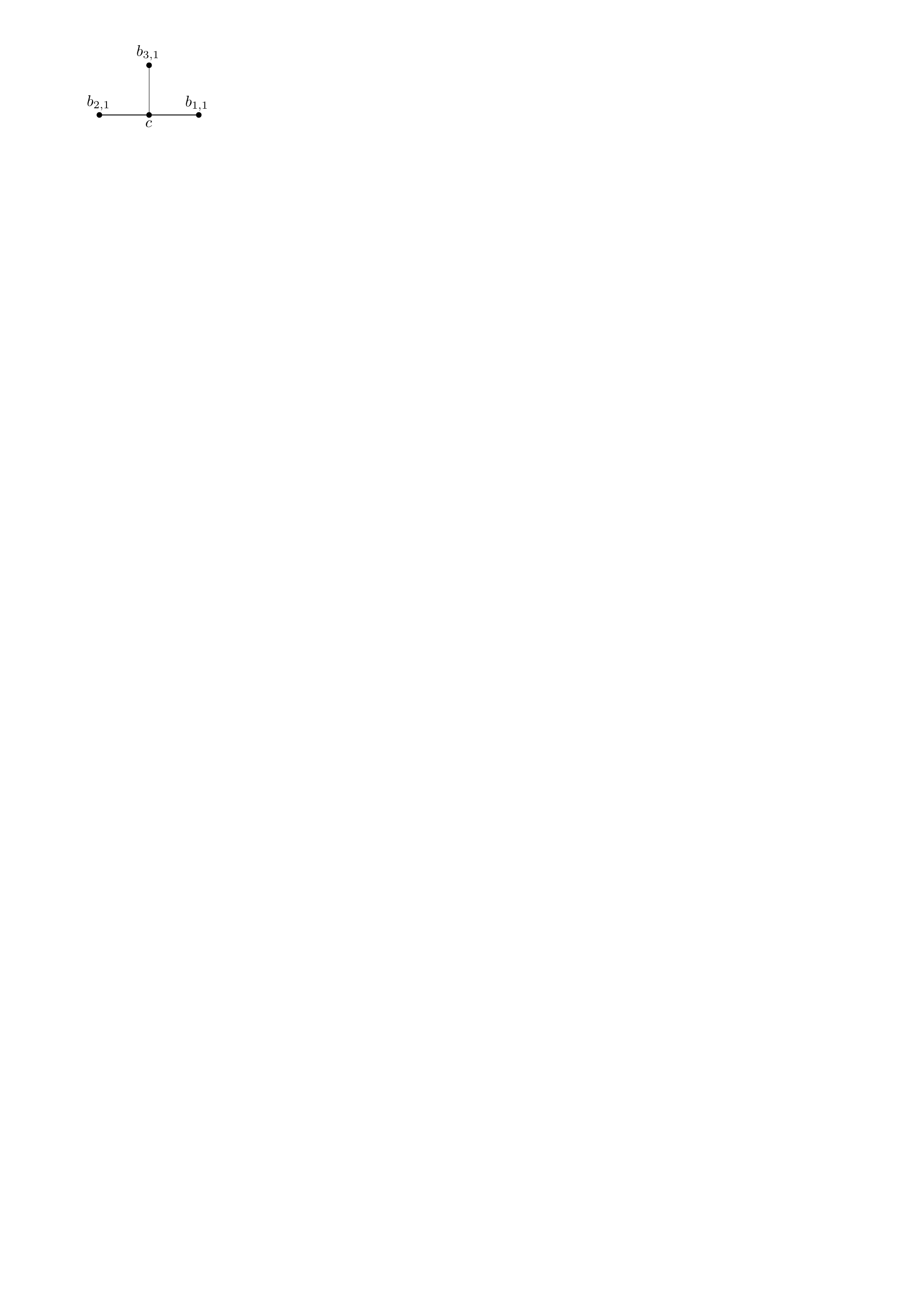}
      \caption{Gadget for a degree-4 vertex~$v$; the edge
        incident to~$c(v)$ that is not covered by the angle cover
        corresponds to the colour of~$v$.}
      \label{fig:np-gadget}
    \end{minipage}
    \hfill
    \begin{minipage}[t]{0.41\textwidth}
      \centering
      \includegraphics[page=5,scale=.82]{NPC-constr}
      \caption{The edge $(u,v)$ of~$G$ is represented by the three
        curved edges in~$H$.  Here, $(u,v)$ is the third edge of~$u$
        and the second edge of~$v$; $\deg(u)=3$ and $\deg(v)=4$. 
      }
      \label{fig:np-between-gadgets}
    \end{minipage}
  \end{figure} 
  
  It remains to show that $G$ is 3-colourable if and only if
  $(H=H(G),f)$ has an angle cover.  We start with the ``only if''
  direction.

  \emph{``$\Rightarrow$'': $G$ is $3$-colourable implies that $(H,f)$
    has an angle cover:} \\
Let $t \colon V\to\{0,1,2\}$ be a 3-colouring of~$G$.
We construct an angle cover $\alpha \colon U\to F\times F$.
For each vertex $v \in V$, if $t(v)=k$, then set
$\alpha(c(v))=((c(v),e^{k+1}_1(v)),$ $(c(v),e^{k+2}_1(v)))$,
where all superscripts are taken modulo~3.
Also, for $j=1,\dots,$ $\deg(v)$, set
$
\alpha(e^k_j(v))=\left((e^k_j(v),e^k_{j-1}(v)),(e^k_j(v),a^k_j(v))\right),
$
where $e^k_{j-1}(v)$ is $c(v)$ for $j=1$.
Furthermore, for $\ell \neq k$ (i.e.,
$\ell \in \{k+1, k+2\}$), and $(u,v) \in E$,
set
$
\alpha(e^\ell_j(v))=\left((e^\ell_j(v),e^\ell_{j+1}(v)),(e^\ell_j(v),e^\ell_i(u))\right),
$
where $E_i(u) = E_j(v)$.
Since any vertex of degree at most $2$ covers all its adjacent
edges, all edges in the construction,
except for possibly $(e^k_j(v),e^k_i(u))$ where
$E_i(u) = E_j(v)$, are covered.
Since $t$ is a $3$-colouring, we know that
$t(u)\neq t(v) = k$.  Therefore, the edge
$(e^k_j(v),e^k_i(u))$ is covered by $e^k_i(u)$
by the construction above, so all edges are covered by the
constructed angle cover.

\emph{``$\Leftarrow$'': $(H,f)$ has an angle cover implies 
  that $G$ is $3$-colourable:}\\
For every vertex $v \in V$, if $c(v)$ covers edges $(c(v),e^{k+1}_1(v))$
and $(c(v),e^{k-1}_1(v))$ in the angle cover then set
$t(v)=k$ (i.e., the colour given by the edge not covered by $c(v)$).
Suppose for the sake of contradiction that an edge $(u,v)$ in $G$ is
not properly coloured and $t(u)=t(v)=k$.
The edge $(c(u),e^k_1(u))$ is not covered by $c(u)$ and the edge
$(c(v),e^k_1(v))$ is not covered by $c(v)$.  Then, those edges
must be covered by $e^k_1(u)$ and $e^k_1(v)$, respectively,
so $e^k_1(u)$ and $e^k_1(v)$ cannot cover the edges
$(e^k_1(u),e^k_2(u))$ and $(e^k_1(v),e^k_2(v))$, respectively,
nor the edges
$(e^k_1(u),e^k_i(w))$ where $E_1(u) = E_i(w) = (u,w)$ and
$(e^k_1(v),e^k_j(x))$ where $E_1(v) = E_j(x) = (v,x)$, respectively.
Let $E_{i^*}(u) = E_{j^*}(v) = (u,v)$.
Repeating this argument, we see that the edge $(e^k_{i^*}(u),e^k_{j^*}(v))$ 
is neither covered by~$e^k_{i^*}(u)$ nor by~$e^k_{j^*}(v)$.
This is a contradiction since we assumed that $(H,f)$ has an angle
cover.
\end{proof}

Now we apply Proposition~\ref{prop:topological} to a drawing of the
graph in the above reduction.

\begin{corollary}
  The angle cover problem is NP-hard even for \emph{planar}
  graphs of maximum degree~5.
\end{corollary}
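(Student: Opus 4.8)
The plan is to combine the reduction from Theorem~\ref{thm:np-hard} with the planarization machinery of Proposition~\ref{prop:topological}. Given an instance $G$ of 3-colouring, I would first build the graph $H = H(G)$ together with its rotation system $f$ exactly as in the proof of Theorem~\ref{thm:np-hard}. Recall that this takes polynomial time, that $H$ has maximum degree~5, and that $(H,f)$ admits an angle cover if and only if $G$ is 3-colourable.

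Next I would realize $(H,f)$ as a \emph{topological graph}~$T$: draw the vertices of~$H$ as points in general position and the edges as curves so that the cyclic order of the edges leaving each vertex agrees with~$f$ and the drawing is ``good'', that is, any two edges meet in at most one point (a shared endpoint or a proper crossing) and no point lies on three edges. Any rotation system can be realized this way: draw the edges respecting~$f$ in an arbitrary fashion, then locally reroute edge interiors to clean the drawing up (an even number of crossings between a fixed pair of edges can be removed entirely, an odd number reduced to one, and crossings between adjacent edges can be removed), none of which changes the rotation at any vertex. Since~$H$ has only $O(|V|+|E|)$ edges, the resulting good drawing has only polynomially many crossings.

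Finally I would planarize~$T$ as in the definition preceding Proposition~\ref{prop:topological}, obtaining a plane graph~$P$. By Proposition~\ref{prop:topological}, $T$ admits an angle cover if and only if~$P$ does, and~$T$ admits an angle cover precisely when $(H,f)$ does, hence precisely when~$G$ is 3-colourable. The graph~$P$ is planar by construction, it is produced in polynomial time because~$T$ has polynomially many crossings, and its maximum degree is still~5: the vertices inherited from~$H$ keep their degree (crossings lie in edge interiors, so they do not touch vertices of~$H$), while each new vertex created at a crossing has degree exactly~4. Thus an algorithm deciding whether~$P$ admits an angle cover would decide 3-colourability of~$G$, which establishes NP-hardness for planar graphs of maximum degree~5.

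The only step that needs any care is the realizability claim in the middle paragraph — that the particular rotation system~$f$ output by the reduction can be drawn as a good topological graph with polynomially many crossings — but this is a routine fact about plane drawings of graphs, and everything else is immediate from Theorem~\ref{thm:np-hard} and Proposition~\ref{prop:topological}.
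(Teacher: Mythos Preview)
Your proposal is correct and follows exactly the paper's approach: the paper's entire proof is the single sentence ``Now we apply Proposition~\ref{prop:topological} to a drawing of the graph in the above reduction,'' and you have simply spelled out what that sentence entails (realize $(H,f)$ as a good topological drawing, planarize, observe the new crossing vertices have degree~4 so the maximum degree stays~5, and invoke Proposition~\ref{prop:topological} for the equivalence). The only addition is your explicit check that the drawing can be chosen with polynomially many crossings, which the paper leaves implicit but which is indeed needed for the reduction to run in polynomial time.
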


\section{Laman Graphs}
\label{sec:laman}

Laman graphs are a natural class of graphs to consider for the angle cover problem because their \emph{size} characterization insures low edge density: for all $k$, every $k$-vertex subgraph has at most $2k-3$ edges.
Another characterization due to Henneberg~\cite{Henneberg1911} is that Laman graphs (with at least two vertices) are those graphs that can be constructed by starting with an edge and repeatedly either
\begin{description}
    \item[(S1)] adding a new vertex to the graph and connecting it to two
      existing vertices, or
    \item[(S2)] subdividing an edge of the graph and adding an edge
      connecting the newly created vertex to a third
      vertex
\end{description}
The size characterization suggests that all Laman graphs admit an angle cover.
However, this is not the case.

\begin{observation}
  There is a Laman graph that does not admit an angle cover.
\end{observation}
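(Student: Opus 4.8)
My plan is first to reduce the question to a cleaner combinatorial form. A Laman graph $G$ satisfies $|E| = 2|V|-3$ and, in addition, has minimum degree at least~$2$: deleting a vertex of degree $\le 1$ would leave a subgraph on $|V|-1$ vertices with at least $2|V|-4 = 2(|V|-1)-2$ edges, exceeding the sparsity bound $2(|V|-1)-3$. Hence in any angle cover every vertex covers exactly two edges, so counting vertex--edge covering incidences gives $2|V| = \sum_{e \in E}(\text{number of endpoints of } e \text{ that cover it})$. As every edge is covered at least once, and each such count is at most~$2$, exactly $2|V|-|E| = 3$ edges are covered by both endpoints and every other edge by exactly one. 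Equivalently, if at each vertex $v$ we record the complementary arc $A_v$ of $\deg(v)-2$ consecutive \emph{uncovered} edges, an angle cover is precisely a choice of arcs $(A_v)_{v\in V}$ that are pairwise edge-disjoint. So it suffices to produce a Laman graph together with a rotation system — say, the one induced by a plane embedding, e.g.\ a pointed pseudo-triangulation — for which no such family of disjoint arcs exists; informally, a graph that forces at least four edges to be covered twice.

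For the graph itself I would adapt the tight counterexamples of Section~\ref{sec:examples}. Their engine is a degree-$4$ vertex $c$ two of whose incident edges, non-consecutive in the rotation, run to ``sinks'' — the degree-$2$ vertices of Fig.~\ref{fig:counterexample-5}, or the copies of $K_4$ of Fig.~\ref{fig:counter345} — that must cover those edges themselves, so that whichever consecutive pair $c$ selects it double-covers one sink edge. One forced double-cover is affordable for a Laman graph, which has three units of slack, so I would either (i) wire four such units together, sharing peripheral structure, so that the forced double-covered edges are all distinct, or (ii) build a single propagating gadget in the spirit of Fig.~\ref{fig:strong}, chasing the implications ``$c$ covers an edge $\Rightarrow$ an incident degree-$3$ vertex must cover its remaining two edges $\Rightarrow\cdots$'' around a closed walk until one reaches either a fifth double-cover or a vertex whose two forced covered edges are not consecutive. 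A subtlety forced by the Laman condition: every auxiliary ``sink'' must be assembled from degree-$2$ and degree-$3$ vertices only, since $K_4$ — the compact sink of Fig.~\ref{fig:counter345} — is already too dense ($6 > 2\cdot 4 - 3$) to sit inside a Laman graph. A short case analysis, of the kind used to prove the Fig.~\ref{fig:strong} example, would then show that no legal family $(A_v)_v$ exists.

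Finally one must certify that the resulting graph is Laman, which I would do by exhibiting a Henneberg construction using the operations~(S1) and~(S2)~\cite{Henneberg1911}, or by directly checking $|E(G[S])| \le 2|S|-3$ for every vertex subset $S$. I expect this last balancing act to be the main difficulty: the $2n$-edge constructions bite because their global edge budget is exactly saturated, so local obstructions propagate with no room to spare, whereas a Laman graph has slack~$3$ and so the obstruction must be strengthened to cost at least~$4$ while every subgraph remains $(2,3)$-sparse — which rules out the compact dense gadgets used before and makes both the counterexample and the verification of its Laman property noticeably more intricate.
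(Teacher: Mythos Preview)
Your reformulation in the first paragraph is correct and clean: a Laman graph has minimum degree~$2$, so every vertex contributes exactly two covered incidences, and exactly three edges are double-covered in any angle cover. Equivalently, an obstruction must force at least four double-covers (or, as you phrase it, make the complementary arcs overlap somewhere). That is a good lens.

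The gap is that you never actually build the graph. Everything from the second paragraph on is a programme: adapt the sink gadgets of Section~\ref{sec:examples}, wire four of them together or run a propagation cycle, then certify Laman-ness. You yourself flag the tension---the dense sinks like $K_4$ are forbidden by $(2,3)$-sparsity, so the obstruction must be rebuilt from degree-$2$/$3$ pieces while still costing four double-covers---and you do not resolve it. As written, this is a plausible plan whose hard step is left open, not a proof.

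The paper sidesteps all of this with a different and much shorter mechanism: a local pigeonhole on a two-colouring of the edges. One exhibits a small Laman graph containing four ``black'' vertices whose incident edges alternate between two colour classes in the rotation; any angle at such a vertex therefore covers exactly one edge of each colour. Five black edges live among those four black vertices, and only black vertices are incident to black edges, so at most four of the five can be covered. No propagation, no case analysis, no delicate balancing against the $(2,3)$-sparsity bound; the Laman property is checked by a Henneberg construction using only step~(S1). Note also that the paper's example is \emph{not} plane (cf.\ Section~\ref{sec:conclusion}), so your parenthetical restriction to rotation systems coming from a planar embedding or a pointed pseudo-triangulation is an extra constraint the statement does not require and the paper's witness does not meet.
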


\begin{proof}
  Consider the graph with embedding in Fig.~\ref{fig:counter_small_laman}. Suppose
  that this graph has an angle cover.
  The edges adjacent
  to the black vertices in the induced rotation system alternate between red and black,
  and so each black vertex can only cover at most one black edge.
  However, there are five black edges, but only four black vertices,
  and all vertices adjacent to a black edge are black vertices,
  so no angle cover exists.
  
  The graph is indeed a Laman graph since it admits a Henneberg
  construction using step~(S1) to first create all black
  vertices, and then all red vertices.
\end{proof}

\begin{figure}[htb]
  \centering
  \includegraphics[scale=0.9]{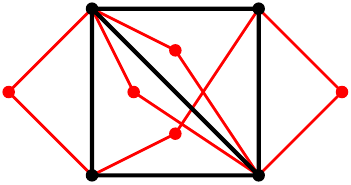}
  \caption{A Laman graph that does not admit an angle cover.}
  \label{fig:counter_small_laman}
\end{figure}

\section{Multi-Angle Cover}
\label{sec:multi-angle-cover}

In this and the following section, we consider two natural generalizations of the basic
angle cover problem.  First we consider the \emph{$a$-angle
  cover problem} where every vertex~$v$ covers $a$ angles, where an angle
  is (as before) a pair of edges incident to~$v$ that are consecutive in the
circular ordering around~$v$.  We start with a positive result.

\begin{theorem}\label{thm:AlgForAangles}
  For even $\Delta>0$, any graph of maximum degree~$\Delta$ with any
  rotation system
  admits an $a$-angle cover for $a \geq \Delta/2 - \lfloor \Delta/6
  \rfloor$, and such an angle cover can be found in linear time.
\end{theorem}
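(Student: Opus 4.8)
The plan is to reduce to the $\Delta$-regular case, cut the rotation at every vertex into consecutive blocks of size $2$ or $3$, spend exactly one angle per block, and decide which edge each size-$3$ block ``passes on'' to its neighbour by orienting a maximum-degree-$2$ auxiliary graph. It suffices to realize $a=\lceil\Delta/3\rceil$ angles per vertex, since $\lceil\Delta/3\rceil=\Delta/2-\lfloor\Delta/6\rfloor$ for even $\Delta$ and having more angles available is never harmful. First I would make the graph $\Delta$-regular exactly as in the proof of Theorem~\ref{thm:runtime}: add dummy edges between vertices of degree $<\Delta$, and, if one vertex of smaller degree survives, add self-loops to it (its deficiency is even because $\Delta$ is even). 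As in that proof, an $a$-angle cover of the augmented graph induces one of the original graph: discard angles consisting only of dummy edges, and replace a dummy edge inside a retained angle by the nearest real edge on that side. When inserting the self-loops I would place their two rotation positions so that each self-loop occupies the middle of a future size-$3$ block or an edge of a future size-$2$ block, so that self-loops are covered ``for free''.

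Next, at each vertex $v$ I would partition its cyclic list of $\Delta$ incident edges into $\lceil\Delta/3\rceil$ consecutive blocks of size $2$ or $3$; this is possible for every even $\Delta$ (write $\Delta=3s+2t$ with $s+t=\lceil\Delta/3\rceil$). I assign one angle to each block, giving $a=\lceil\Delta/3\rceil$ angles at $v$. A size-$2$ block is assigned the angle formed by its two edges, covering both at $v$. A size-$3$ block $\langle f_1,f_2,f_3\rangle$ is assigned $\{f_1,f_2\}$ or $\{f_2,f_3\}$; either choice covers the middle edge $f_2$ at $v$ and exactly one of $f_1,f_3$, while the other of $f_1,f_3$ is ``passed on'' to its other endpoint. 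Hence the only edges in danger of being uncovered are those that are an extreme edge of a size-$3$ block at \emph{both} of their endpoints; call these the \emph{critical} edges.

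Now I would form the auxiliary multigraph $\mathcal H$ whose vertices are the size-$3$ blocks (over all vertices of the graph) and whose edges are the critical edges: a critical edge joins the size-$3$ block containing it at one endpoint to the size-$3$ block containing it at the other endpoint. Each size-$3$ block has exactly two extreme edges, so $\mathcal H$ has maximum degree $2$ and is a disjoint union of paths and cycles. I would orient each path from one end to the other and each cycle consistently, so that every block has at most one outgoing edge in $\mathcal H$, and then let each size-$3$ block pass on the extreme edge that is its outgoing $\mathcal H$-edge (and, if it has no outgoing $\mathcal H$-edge, pass on any of its extreme edges that is not critical; a path endpoint with out-degree $0$ has one, and an $\mathcal H$-isolated block has two). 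With this rule every critical edge $e$, joining blocks $T$ and $T'$ in $\mathcal H$, is passed on exactly by the tail of $e$, hence covered by the head's vertex; every non-critical edge is covered as the middle edge of a size-$3$ block or as an edge of a size-$2$ block at one of its endpoints; and every vertex uses one angle per block. All steps are linear, so the cover is found in linear time.

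I expect the main thing to get right is the interface between the reductions rather than the core argument: verifying that regularizing really reduces the problem, that the block sizes and count collapse to exactly $\lceil\Delta/3\rceil=\Delta/2-\lfloor\Delta/6\rfloor$ for every even $\Delta$, and that the dummy edges and (suitably placed) self-loops introduced in the first step never become critical edges. The combinatorial heart — that the ``pass-on'' conflicts live on a graph of maximum degree $2$, which is therefore trivially orientable so that each critical edge has a unique responsible endpoint — is the short part.
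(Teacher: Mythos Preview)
Your argument is correct and takes a genuinely different route from the paper's. The paper also regularises to a $\Delta$-regular multigraph, but then builds a directed Eulerian cycle and groups the rotation at each vertex into \emph{sextets} of six consecutive edges. By routing the Eulerian walk so that, within each sextet, the first two exits are adjacent, it guarantees two consecutive outgoing edges per sextet; covering all $\Delta/2$ outgoing edges then needs only $\Delta/2-\lfloor\Delta/6\rfloor$ angles. Your proof never touches Eulerian structure: you cut the rotation into blocks of size~$2$ and~$3$, and the only global coordination is orienting the auxiliary graph $\mathcal H$ of size-$3$ blocks joined by critical edges, which has maximum degree~$2$ and is therefore trivially orientable with out-degree $\le 1$. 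This is arguably more elementary and more transparent about \emph{why} the bound is $\lceil\Delta/3\rceil$: one angle per block, and a max-degree-$2$ conflict graph can always be resolved. The paper's approach, on the other hand, is a more direct extension of its degree-$4$ argument (Theorem~\ref{thm:runtime}) and packages the ``who covers what'' decision into the orientation itself. Two minor remarks: your worry that dummy edges must not be critical is unnecessary, since your $\mathcal H$-orientation handles critical dummy edges exactly like real ones; and even a $G$-self-loop whose two rotation slots are the two extremes of a single block is harmless, because either choice of angle at that block already covers the self-loop as an edge. So the careful self-loop placement is safe but not actually needed.
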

\begin{proof}
  The proof is similar to that of Theorem~\ref{thm:runtime}.  As in
  that proof, we can assume that the given graph is connected and
  $\Delta$-regular.
  
  We again direct the edges of the graph to form a directed Eulerian
  cycle.  To this end, we number the edges incident to each vertex
  from $0$ to $\Delta-1$ in circular order.  For $i=0,\dots,\lfloor \Delta/6
  \rfloor-1$, we call the group of edges $6i,\dots,6i+5$ a
  \emph{sextet}.  In creating the directed cycle, when entering a
  vertex $v$, our goal is to exit (directing an outgoing edge) in such
  a way that we obtain two consecutive outgoing edges in each sextet
  at~$v$.  This proves the theorem since every vertex is then able
  to cover all of its $\Delta/2$ outgoing edges (and thus all edges in the
  graph are covered) using at most $\Delta/2 - \lfloor \Delta/6
  \rfloor$ angles~-- in each of the $\lfloor \Delta/6 \rfloor$
  sextetts, two outgoing edges are covered by a single angle.

  The rule we follow to ensure that every sextet contains two
  consecutive outgoing edges is when an incoming edge to $v$
  enters a sextet for the first time, we exit $v$ on an edge $e$
  in the same sextet that lies between two undirected edges (edges
  that are not part of the cycle yet).  Since the sextet contains six
  edges,
  either three consecutive edges of the sextet precede or follow the
  first incoming edge to the sextet and such an edge $e$ exists.
  When the cycle next enters $v$ on an edge in this sextet, we exit on
  one of the remaining undirected edges immediately before of after $e$.
\end{proof}

Theorem~\ref{thm:AlgForAangles} shows that as long as the number of angles each vertex can cover is large enough, we can find an angle cover efficiently for any graph with maximum degree $\Delta$, but the bound is not always tight.

Theorem~\ref{thm:AlgForAangles} implies that all graphs with maximum degree $\Delta=4,6,8$ have $a$-angle
covers with $a=2,2,3$, respectively.
However, for $\Delta=4$,
Theorem~\ref{thm:runtime} shows that $a=1$ suffices.
For $\Delta=6$,
$a=1$ certainly does not suffice (see Fig.~\ref{fig:counterexample-5}
for an example with $\Delta<6$) so $a=2$ is optimal.
In general, Theorem~\ref{thm:np-hard-multi} shows that if $\Delta \geq 4a+1$ (i.e. $a \leq (\Delta-1)/4$), the $a$-angle cover problem is NP-hard.
In Theorem~\ref{thm:np-hard-two}, we obtain a slight improvement to this for $\Delta=8$,
which implies that the bound $a=3$ (from Theorem~\ref{thm:AlgForAangles}) is optimal for $\Delta=8$.

\begin{theorem}
  \label{thm:np-hard-multi}
  For any $a\geq1$ the $a$-angle cover problem is NP-hard even
  for graphs of maximum degree $4a+1$.
\end{theorem}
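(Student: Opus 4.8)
The plan is to generalize the NP-hardness reduction from 3-colouring that was used in Theorem~\ref{thm:np-hard}. That proof handles the case $a=1$ with maximum degree $5 = 4\cdot 1 + 1$; the task is to scale every gadget so that each ``working'' vertex must still spend exactly one angle on a forced local assignment even though it now has $a$ angles to allocate. First I would recall the structure of the $a=1$ gadget: for each vertex $v$ of the input graph $G$, there is a centre $c(v)$ and, for each colour $k\in\{0,1,2\}$, a path of ``colour vertices'' $e^k_1(v),\dots,e^k_{\deg(v)}(v)$, each padded by two degree-$1$ separator vertices $a^k_j(v),b^k_j(v)$ whose sole purpose is to restrict which pairs of edges an angle at $e^k_j(v)$ can cover. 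The colour of $v$ is encoded by which of the three edges leaving $c(v)$ is left uncovered by the angle at $c(v)$, and edges of $G$ are represented by the three cross-gadget edges $(e^k_j(v),e^k_i(u))$, exactly one of which (the one in the colour class $k=t(v)$) must be covered from the other side.

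The key modification is to replace each separator vertex by $2a-1$ separator vertices, and more importantly to give each colour vertex $e^k_j(v)$ enough incident separator edges so that covering all of them consumes $a-1$ of its $a$ angles, leaving exactly one angle free — precisely the situation of the $a=1$ proof. Concretely, at a colour vertex the circular order should interleave the $2(a-1)$ extra separator edges with the ``structural'' edges (to $e^k_{j-1}(v)$, to $e^k_{j+1}(v)$, to the cross-gadget partner $e^k_i(u)$, and to its two original separators) so that: (i) every separator edge that is a leaf must be covered here, forcing $a-1$ angles to be spent on separator pairs; and (ii) the remaining single angle cannot simultaneously cover the ``previous'' structural edge and either the ``next'' structural edge or the cross edge, reproducing constraints (i)–(ii) from the original proof. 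The centre $c(v)$ likewise gets enough leaf separators that $a-1$ of its angles are forced, leaving one angle to select two of the three colour-edges, so the uncovered colour-edge again encodes $t(v)$. Since each vertex in the new gadget has maximum degree equal to (original degree) $+ 2(a-1)$, and the original colour vertices had degree $5$, the maximum degree becomes $5 + 2(a-1) = 2a+3$ — so I would need to be slightly more careful: to hit exactly $4a+1$ I should pack the forced separators more densely, using $2a-1$ leaf edges at the busiest vertices arranged so that $a-1$ angles cover $2(a-1)$ of them and the degree budget works out. The clean way is to mirror the ``sextet''-style counting of Theorem~\ref{thm:AlgForAangles} in reverse: arrange $4a$ leaf edges plus one structural edge at certain vertices, so that covering the $4a$ leaves needs at least $2a$ edge-endpoints but the consecutivity constraint wastes enough that all $a$ angles are pinned, and then re-examine which vertices actually need to be that large. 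I would verify the exact degree bookkeeping so that no vertex exceeds $4a+1$ and at least one construction vertex attains it.

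With the gadget in place, the correctness argument is a direct lift of the two implications in Theorem~\ref{thm:np-hard}. For the forward direction, from a proper 3-colouring $t$ of $G$ I build the $a$-angle cover by: spending $a-1$ forced angles per vertex on the leaf separators exactly as dictated by the circular order; at $c(v)$, using the last angle to cover the two colour-edges $e^{t(v)+1}_1$ and $e^{t(v)+2}_1$; along the colour-$t(v)$ path, using each colour vertex's last angle to cover its edge toward $c(v)$ (so the cross-gadget edge in colour $t(v)$ is left for the neighbour); and along the off-colour paths, using the last angle to cover the outgoing path edge together with the cross-gadget edge, exactly as in the original. Properness of $t$ guarantees that for each edge $(u,v)$ the colour-$k$ cross edge with $k=t(v)$ is covered from $u$'s side because $t(u)\ne k$. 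For the reverse direction, I argue that in any $a$-angle cover the $a-1$ leaf-separator angles are forced at every colour vertex and at every centre (a leaf edge can only be covered by the non-leaf endpoint), so effectively one angle per vertex remains, and the original case analysis — tracing the ``uncovered colour-edge'' out of $c(v)$ down the path and concluding a monochromatic edge yields an uncovered cross edge — goes through verbatim. The main obstacle is purely the combinatorial layout: choosing the circular order at the heavily-padded colour vertices so that the forced $a-1$ angles are genuinely forced and genuinely leave a single ``free'' angle with exactly the old adjacency restrictions, while simultaneously pinning the maximum degree to $4a+1$ rather than something larger; once that rotation system is written down explicitly, everything else is bookkeeping.
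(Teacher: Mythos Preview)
Your proposal has a genuine gap: the claim that ``a leaf edge can only be covered by the non-leaf endpoint'' is false. A degree-$1$ vertex has a single incident edge, and its own angle (any one of its $a$ angles) covers that edge; this is exactly how the separator edges in the $a=1$ gadget are disposed of in the proof of Theorem~\ref{thm:np-hard} (``any vertex of degree at most~$2$ covers all its adjacent edges''). Consequently, padding $e^k_j(v)$ or $c(v)$ with extra leaf separators does \emph{not} force those vertices to spend any of their $a$ angles on the separator edges. With all $a$ angles still free, a colour vertex can, for instance, cover the ``previous'' path edge with one angle and the ``cross'' edge (or the ``next'' edge) with another, so the constraints (i)--(ii) you want to reproduce from the $a=1$ proof no longer hold, and the reverse direction of the reduction collapses.

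The paper's proof avoids this by attaching the $2(a-1)$ extra edges not to leaves but each to its own copy of $K_{4a+1}$. Such a copy together with its one connecting edge has $\binom{4a+1}{2}+1 = 2a(4a+1)+1$ edges incident to its $4a+1$ internal vertices, each of which can cover at most $2a$ edges with its $a$ angles; hence the connecting edge \emph{must} be covered from the gadget side. Placing the $2(a-1)$ connecting edges consecutively in the rotation then genuinely forces $a-1$ of the $a$ angles at $e^k_j(v)$ (and at $c(v)$), after which the $a=1$ correctness argument carries over verbatim. Note also that the degree bound $4a+1$ is attained at the connection vertex \emph{inside} each $K_{4a+1}$ copy, not at the gadget vertices, which only reach degree $2a+3$ and $2a+1$; your attempt to engineer the bound at the gadget vertices themselves is part of what sent you in the wrong direction.
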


\begin{proof}
  If every vertex can select $a>1$ angles, we can use the same NP-hardness reduction described in the proof of Theorem~\ref{thm:np-hard} but attach $2(a-1)$ adjacent edges, each connected to its own copy of $K_{4a+1}$, to every vertex $e^k_j(v)$ to force the vertex to ``waste'' $a-1$ of its angles on these edges.
  These edges precede the edge $(e^k_j(v),a^k_j(v))$ in the circular order at vertex $e^k_j(v)$. Now, $\deg(e^k_j(v))=2a+3<4a+1$.
  For the centre vertex $c(v)$, we similarly attach $2(a-1)$ edges, each connected to its own copy of $K_{4a+1}$, to $c(v)$ so that these edges lie between $(c(v),e^0_1(v))$ and $(c(v),e^1_1(v))$. Now, $\deg(c(v))=2a+1<4a+1$.
  
  Finally, since the maximum degree in each of these attached $K_{4a+1}$ copies is $4a+1$, the maximum degree of the graph is $4a+1$.
\end{proof}

\begin{theorem}
  \label{thm:np-hard-two}
  The 2-angle cover problem is NP-hard even for graphs of maximum
  degree~$8$.
\end{theorem}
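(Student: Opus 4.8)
The plan is to follow the same template as Theorem~\ref{thm:np-hard-multi} but with a more economical gadget, reducing from 3-colouring so that the degree bound drops from $4a+1 = 9$ (for $a=2$) down to~$8$. The key observation is that in the reduction of Theorem~\ref{thm:np-hard}, each path vertex $e^k_j(v)$ has degree~$5$ and each separator edge $(e^k_j(v),a^k_j(v))$, $(e^k_j(v),b^k_j(v))$ serves the single purpose of preventing the angle at $e^k_j(v)$ from being placed ``badly'' — i.e.\ the vertex is forced to cover its path-predecessor edge together with exactly one of \{path-successor, external\}. When each vertex may now pick $a=2$ angles, we must instead force it to waste one angle while still leaving it unable to cover both the path-successor edge and the external edge with its remaining angle. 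So I would attach to each $e^k_j(v)$ just \emph{one} extra pendant edge (to its own small high-degree clique, forcing it to be covered only from the $e^k_j(v)$ side) placed in the rotation so that, after one angle is spent ``properly'' (on path-predecessor plus that pendant, or on that pendant plus one separator), the separators still block the remaining angle from simultaneously covering path-successor and external. A careful choice of the circular order $[e^k_{j-1}(v),\,a^k_j(v),\,\text{pendant},\,e^k_i(u),\,e^k_{j+1}(v),\,b^k_j(v)]$ (six edges, degree~$6$) should do this; for the centre vertex $c(v)$ I would likewise add pendants so that two angles there leave exactly one of the three colour-edges $(c(v),e^0_1(v)),(c(v),e^1_1(v)),(c(v),e^2_1(v))$ uncovered, encoding the colour as before, keeping $\deg(c(v))\le 8$.

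Concretely, the steps would be: (1)~describe the modified gadget and its rotation system, verifying $\Delta \le 8$ everywhere (the cliques contribute degree-$8$ vertices, the $e$- and $c$-vertices stay at degree $\le 7$); (2)~prove the ``$\Rightarrow$'' direction by exhibiting, from a 3-colouring $t$, an explicit $2$-angle cover — on the ``chosen colour'' path the vertices $e^k_j(v)$ use one angle to cover path-predecessor plus one separator and the second angle to cover the pendant plus the other separator, leaving the external edge to be covered from the opposite endpoint (which is legitimate precisely because $t(u)\ne t(v)$); on the two non-chosen-colour paths they orient their two angles to cover path-successor, external, pendant and the separators; (3)~prove ``$\Leftarrow$'' by the same propagation argument as in Theorem~\ref{thm:np-hard}: if $c(u)$ and $c(v)$ leave the same colour-$k$ edge uncovered, then along both colour-$k$ paths each $e^k_j$ is forced (having burned one angle on its mandatory pendant and separators) to cover its path-predecessor edge, so it cannot also cover path-successor and external; iterating from $j=1$ up to the indices $i^\ast,j^\ast$ with $E_{i^\ast}(u)=E_{j^\ast}(v)=(u,v)$ shows the cross edge $(e^k_{i^\ast}(u),e^k_{j^\ast}(v))$ is covered by neither endpoint, a contradiction.

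The main obstacle I anticipate is the local gadget design in step~(1): getting a \emph{single} degree-$6$ configuration at each $e$-vertex in which (i)~with two angles available there is always a valid way to cover path-predecessor, pendant, and both separators while covering at most one of \{path-successor, external\}, and (ii)~forcing the path-predecessor edge to be covered locally whenever it is \emph{not} covered from the $c$-side — so that the ``forbidden'' state really does propagate. This is exactly the role the two separators played for $a=1$, but with an extra angle in play one has to check more cases (which pair of consecutive edges the second angle can hit) to be sure no ``escape'' assignment exists; I would enumerate the $\binom{6}{2}$-restricted-to-consecutive angle choices at $e^k_j(v)$ and confirm the implication. Once the gadget is pinned down, the colour-encoding at $c(v)$ and the two directions of the equivalence follow the proof of Theorem~\ref{thm:np-hard} essentially verbatim, and the degree bound of~$8$ then yields, in combination with Theorem~\ref{thm:AlgForAangles}, that $a=3$ is optimal for $\Delta=8$.
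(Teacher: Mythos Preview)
Your high-level plan (reuse the 3-colouring reduction and absorb the extra angle with a forcing gadget) is exactly what the paper does, but the concrete gadget you sketch does \emph{not} work, and the gap is precisely at the point you flag as ``the main obstacle''.

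Take your proposed rotation $[\,e^k_{j-1}(v),\,a^k_j(v),\,\text{pendant},\,e^k_i(u),\,e^k_{j+1}(v),\,b^k_j(v)\,]$ and suppose (as in the $\Leftarrow$ direction) that $e^k_j(v)$ must cover both its predecessor edge and the pendant. The pendant can be covered by the angle $(\text{pendant},\,e^k_i(u))$, and the predecessor by the angle $(e^k_{j-1}(v),\,a^k_j(v))$; together these cover $\{e^k_{j-1}(v),\,a^k_j(v),\,\text{pendant},\,e^k_i(u)\}$, which \emph{includes the external edge}. So ``covers pred and pendant'' does \emph{not} imply ``does not cover external'', and the propagation argument at $j=j^\ast$ breaks: the cross-edge $(e^k_{i^\ast}(u),e^k_{j^\ast}(v))$ can be covered from the $v$-side even when $t(u)=t(v)=k$. (Your phrasing ``cannot also cover path-successor \emph{and} external'' is true but insufficient; you need ``cannot cover external'' on its own.) In fact no six-edge rotation with a single forced pendant can give the needed implication: if pred and pendant are adjacent then one angle covers both and the second is unconstrained; otherwise each of pred and pendant would have to have both cyclic neighbours in $\{a,b\}$, which is impossible in a $6$-cycle.

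The paper resolves this by attaching \emph{three} new edges (not one) at each $e^k_j(v)$: two edges to a graph~$T$ (a $K_7$ with two extra universal vertices $b_1,b_2$) and one edge to a fresh isolated vertex $x$, inserted so that $b_1,b_2$ sit between the degree-$1$ neighbours $x$ and $a^k_j(v)$. A counting argument on~$T$ ($9$ vertices, $37$ edges including the two attachment edges, so $9\cdot 4=36$ coverings are not enough) forces $e^k_j(v)$ to spend one angle on $\{x,b_1,b_2,a^k_j(v)\}$; since these four are contiguous and flanked by separators, that angle cannot touch pred, successor, or external. The remaining angle then behaves exactly as in the degree-$5$ reduction, and $\deg(e^k_j(v))=8$. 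Your single-pendant-clique idea also lacks a forcing argument: with one attachment edge into a clique of order at most~$8$ there is no analogous counting obstruction, so you have not shown the pendant must be covered from the $e^k_j(v)$ side.
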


\begin{proof}
 The maximum degree given by the construction described in the proof
  of Theorem~\ref{thm:np-hard-multi} comes from the use
  of~$K_9$.  To decrease the maximum degree to~$8$, instead of
  attaching two copies of~$K_9$ to a vertex, we attach one copy of the
  following graph, $T$, using two edges.
  The graph $T$ contains a copy of $K_7$ and two vertices $b_1$ and $b_2$ that are connected to all seven vertices of the copy of $K_7$.
  The graph $T$ is attached to an external vertex $v$ (not in $T$) by edges $(b_1,v)$ and $(b_2,v)$.
  Thus every vertex in $T$ has degree~$8$.
  Including these edges, $T$ contains $37$ edges and nine vertices.  Since each vertex can cover at most four edges, at least one edge must be covered by $v$ in any valid angle cover.
  Furthermore, all edges
  except for one of the external outgoing edges can be covered; see
  Fig.~\ref{fig:double-hard-deg-8}.
  
  \begin{figure}[tb]
    \begin{minipage}[t]{.53\textwidth}
      \centering
      \includegraphics{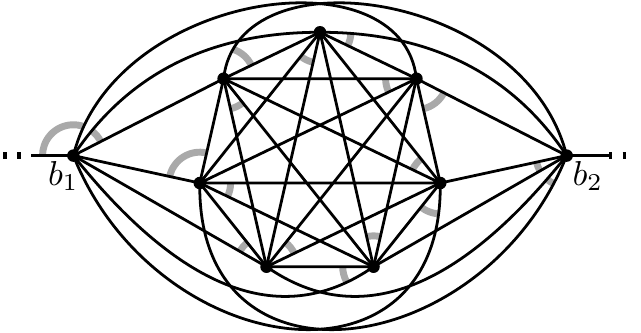}
      \caption{The graph $T$ with a 2-angle cover that covers all edges
        except the outgoing edge from~$b_2$. A symmetric cover
        leaves only the outgoing edge from~$b_1$ uncovered.}
      \label{fig:double-hard-deg-8}
    \end{minipage}
    \hfill
    \begin{minipage}[t]{.43\textwidth}
      \centering
      \includegraphics[page=1]{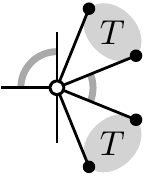}
      \hspace{1ex}
      \includegraphics[page=2]{K7-2angle-construction-c}
      \hspace{1ex}
      \includegraphics[page=3]{K7-2angle-construction-c}
      \caption{The three cases of $c(v)$ (the non-filled circle) with
        the newly added edges to copies of~$T$.}
      \label{fig:double-hard-deg-8-c}
    \end{minipage}
  \end{figure}
  
  For each vertex $e^k_j(v)$ in the original NP-hardness reduction, we
  attach three new edges, using a copy of~$T$, and another
  isolated vertex $x$.  Similarly to the high-degree construction,
  these edges, in the order $(e^k_j(v),x)$, $(e^k_j(v),b_1)$, $(e^k_j(v),b_2)$,
  directly precede the edge to the degree-$1$ vertex
  $(e^k_j(v),a^k_j(v))$ in
  the circular order at vertex $e^k_j(v)$.
  The circular order of just these four edges is then
  $(e^k_j(v),x)$, $(e^k_j(v),b_1)$, $(e^k_j(v),b_2)$, $(e^k_j(v),a^k_j(v))$,
  so we observe that the two edges connecting to $b_1$ and $b_2$
  are between edges connected to degree-$1$ vertices.
  As a result, $e^k_j(v)$
  must ``waste'' one of its angles on either $(e^k_j(v),b_1)$ or
  $(e^k_j(v),b_2)$,
  and hence, it cannot use this angle on any other
  edge connected to a non-isolated vertex. Now $\deg(e^k_j(v))=8$.  For
  the centre vertex $c(v)$, we similarly attach four edges using two
  copies of $T$, to $c(v)$ so that their edges lie between
  $(c(v),e^0_1(v))$ and $(c(v),e^1_1(v))$.  Now, $\deg(c(v))=7$. If
  $c(v)$ covers two of these new edges, then $c(v)$ cannot cover all
  three of its original edges, since it can only cover a total of
  four.  Furthermore, assuming that the edges to each copy of~$T$ are
  consecutive in the edge-ordering of $c(v)$, we can cover any two of
  the original edges alongside one edge from each copy of~$T$.  All
  three cases are depicted in Fig.~\ref{fig:double-hard-deg-8-c}.
\end{proof}

A key part of the proof of the hardness results of Theorems~\ref{thm:np-hard-multi} and \ref{thm:np-hard-two}
is the use of a subgraph (such as $K_9$ or the graph $T$ in
Fig.~\ref{fig:double-hard-deg-8})  that requires one of its adjacent
edges, to a vertex $v$ outside the subgraph, to be covered by $v$.
In fact, we can use any graph without an $a$-angle cover for some rotation system to serve this purpose. 

\begin{theorem}
For all $a \geq 1$,
if there exists a graph with a rotation system that has no $a$-angle cover
and maximum degree $\leq2a+3$,
then the $a$-angle cover problem for graphs with maximum degree
$\leq 2a+3$ is NP-hard.
\label{thm:ExistsImpliesHard}
\end{theorem}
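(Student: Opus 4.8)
The plan is to generalize the reduction template used in Theorems~\ref{thm:np-hard-multi} and \ref{thm:np-hard-two}. The key structural observation is that in both of those proofs, the attached subgraph ($K_{4a+1}$ or the graph $T$) serves exactly one purpose: it forces one of its external connecting edges to be covered by the vertex $v$ outside the subgraph to which it is attached. I would first isolate this property abstractly. Suppose $(\Gamma, \rho)$ is a graph with rotation system, maximum degree $\le 2a+3$, that has no $a$-angle cover. Designate some vertex of minimum degree in $\Gamma$, or more flexibly, argue that $\Gamma$ has a vertex $w$ of degree $\le 2a+2$ (since a $(2a+3)$-regular graph on an odd number of vertices cannot exist when $2a+3$ is odd... actually I should be careful here). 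The cleaner route: attach $\Gamma$ to an external vertex $v$ by adding a single edge $(w, v)$ where $w$ is a vertex of $\Gamma$ with $\deg_\Gamma(w) \le 2a+2$; then $\deg(w) \le 2a+3$ in the combined graph. The claim is that in any $a$-angle cover of the combined graph, $v$ must cover the edge $(w,v)$ — because if $v$ does not, then $(w,v)$ must be covered by $w$'s angle, and restricting the cover to $\Gamma$ (where $w$'s angle now covers one fewer $\Gamma$-edge, or we can reassign) yields an $a$-angle cover of $\Gamma$ under rotation system $\rho$, a contradiction. I would need to verify this restriction argument handles the consecutivity of $(w,v)$ correctly in the rotation — placing $(w,v)$ last in the cyclic order at $w$ so that removing it does not disturb which $\Gamma$-edges are consecutive.

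Second, I would verify there actually is a suitable low-degree vertex. If every vertex of $\Gamma$ has degree exactly $2a+3$, I can instead subdivide an edge or, more simply, observe that I only need one vertex to attach to; a parity/handshake argument shows a $(2a+3)$-regular graph must have an even number of vertices, which is fine — so such graphs can exist. In that case attaching via a single edge would push $w$ to degree $2a+4$, which is too much. The fix: attach to a degree-$(2a+3)$ vertex $w$ by \emph{replacing} one of $w$'s existing separator-style edges, or alternatively require in the theorem statement only that $\Gamma$ has maximum degree $\le 2a+3$ and pick any vertex of degree $< 2a+3$; if no such vertex exists, first delete one edge from $\Gamma$ to create two degree-$(2a+2)$ vertices — but deleting an edge might create an $a$-angle cover. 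A robust workaround is to take two disjoint copies of $\Gamma$, delete one vertex-disjoint edge from each, reason about covers, or simply note that we can subdivide an edge of $\Gamma$ into a path — though that changes whether it has a cover. I would likely resolve this by building the gadget out of $\Gamma$ minus a low-degree vertex plus bookkeeping, mirroring how $T$ was built (the $K_7$ plus two degree-reduced attachment vertices $b_1, b_2$). The safest presentation: choose $\Gamma$ and a vertex $w$ with $\deg_\Gamma(w) \le 2a+2$, justify existence (if all degrees equal $2a+3$, remove a vertex $u$ adjacent to some target $w$, giving $\deg(w) = 2a+2$, and check the smaller graph still has no $a$-angle cover — or just fold the removed vertex's edges into the hardness instance as additional forced edges).

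Third, with the forced-edge gadget in hand, I would plug it into the proof of Theorem~\ref{thm:np-hard} exactly as in Theorem~\ref{thm:np-hard-multi}: to each vertex $e^k_j(v)$ of the colouring-reduction graph, attach $2(a-1)$ copies of the gadget (each via its forced external edge), placed consecutively just before $(e^k_j(v), a^k_j(v))$ in the rotation, so that $e^k_j(v)$ is forced to spend $a-1$ of its $a$ angles covering those forced edges — leaving it with exactly one angle, reproducing the behaviour of the $a=1$ case. Likewise attach $2(a-1)$ gadgets to each $c(v)$ between $(c(v), e^0_1(v))$ and $(c(v), e^1_1(v))$. A degree check confirms $\deg(e^k_j(v)) = 5 + 2(a-1) = 2a+3$ and $\deg(c(v)) = 3 + 2(a-1) = 2a+1$, both $\le 2a+3$, while gadget-internal vertices have degree $\le 2a+3$ by hypothesis on $\Gamma$. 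The correctness equivalence ($G$ is 3-colourable iff the augmented $(H,f)$ has an $a$-angle cover) then follows verbatim from the proof of Theorem~\ref{thm:np-hard} together with the forcing property.

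The main obstacle I anticipate is the degree-accounting when attaching the forced-edge gadget: making sure the attachment vertex's degree stays within $2a+3$ while still guaranteeing the external edge is forced, and handling the edge case where $\Gamma$ is $(2a+3)$-regular so that no ``spare'' degree is available at the attachment point. The elegant resolution used for $\Delta = 8$ — where $T$ is not $K_9$ but a custom graph with two designated low-degree attachment vertices $b_1, b_2$ that together force one external edge — suggests the general fix: rather than attaching $\Gamma$ directly, I should show that from \emph{any} $\Gamma$ with no $a$-angle cover and max degree $\le 2a+3$ one can extract (possibly after minor surgery such as deleting a minimum-degree vertex and tracking its incident edges) a gadget with a single low-degree external-forcing edge, and the bulk of the write-up is verifying that this surgery preserves the ``no $a$-angle cover'' property.
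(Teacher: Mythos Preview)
Your forced-edge restriction argument is correct when $\Gamma$ has a vertex $w$ of degree at most $2a+2$: inserting the external edge $(w,v)$ between rotation-neighbours $e,e'$ and, upon restriction, collapsing any angle through $(w,v)$ back to $\{e,e'\}$ does yield an $a$-angle cover of $\Gamma$, the desired contradiction. The genuine gap is precisely the case you flag and leave open: if $\Gamma$ is $(2a+3)$-regular, attaching a new edge overshoots the degree bound, and none of your proposed surgeries (delete an edge, delete a vertex, subdivide) is shown to preserve the property ``has no $a$-angle cover.'' In general they need not---a minimal counterexample can become coverable after a single edge deletion. Since the theorem must apply to whatever $\Gamma$ is handed to you, this case cannot be sidestepped, and your proposal does not close it.

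The paper resolves the obstacle by a different mechanism that avoids degree growth entirely. Fix an $a$-angle assignment on $H$ (your $\Gamma$) that covers the maximum possible number of edges, and let $D$ be the set left uncovered. Rather than \emph{adding} an edge from the gadget to a host vertex, the paper \emph{deletes} each edge $ww'\in D$ and \emph{replaces} it by two edges $(w,x)$ and $(w',x)$ to an external host vertex $x$, inserting each new edge in the rotation slot vacated by the deleted one. Every gadget vertex thus keeps exactly its $H$-degree, so regularity is harmless. The forcing argument is then a count rather than a restriction: any choice of $a$ angles at the gadget vertices corresponds to an $a$-angle assignment of $H$ and hence covers at most $|E(H)|-|D|$ edges; but the $|E(H)\setminus D|=|E(H)|-|D|$ internal gadget edges are adjacent only to gadget vertices, so covering them exhausts this budget and no replacement edge can be covered from inside the gadget. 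With this forcing property established, the remainder of your plan---placing $2(a-1)$ forced edges contiguously at each host vertex to burn $a-1$ angles and then invoking the $1$-angle reduction of Theorem~\ref{thm:np-hard}---goes through, and the degree check $5+2(a-1)=2a+3$ is as you computed.
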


\begin{proof}
Fix $a\in\mathbb{N}$. Suppose a graph $H$ has a rotation system with no $a$-angle
cover and has maximum degree at most $2a+3$.
Let $D$ be the set of edges left uncovered by an $a$-angle assignment covering the maximum number of edges.

We reduce from the $1$-angle cover problem for graphs of maximum degree~$5$,
whose hardness was shown in Theorem~\ref{thm:np-hard}.
Let $G$ be an input to the $1$-angle cover problem with maximum degree at most
$5$.

We construct a new graph $G_H$ with a rotation system.
$G_H$ has $|D|\cdot|V(G)|$ vertices labelled $x_{v,i}$ for $0\leq
i<|D|$ and $v\in V(G)$,
and $(a-1)\cdot|V(H)|\cdot|V(G)|$ vertices labelled $y_{u,j,v}$ for
$0\leq j<a-1$, $v\in V(G)$, and $u\in V(H)$.

For any fixed $i$, the induced subgraph of $G_H$ given by the set
of vertices $x_{v,i}$ for $v\in V(G)$ is isomorphic to $G$
(with the isomorphism $x_{v,i}\mapsto v$).
Similarly, for any fixed $j$ and fixed $v\in V(G)$,
the induced subgraph of $G_H$ given by the set of vertices
$y_{u,j,v}$ for $u\in V(H)$ is isomorphic to the graph
$(V(H),E(H)\setminus D)$
(with the isomorphism $y_{u,j,v}\mapsto u$).
In both cases, the relative orders of the subgraph edges around
each vertex in the subgraphs are the same as in~$G$ and~$H$, respectively.

Finally, each vertex $x_{v,i}$ also has $2(a-1)$ incident edges, contiguous
in the circular ordering around $x_{v,i}$
(it is unimportant where they are relative
to the other edges).
The $j$th pair of these edges is incident to the vertices $y_{w,j,v}$ and $y_{w',j,v}$
where $ww'$ is the $i$th edge in $D$
(each edge in the pair is incident to exactly one of these, in addition to $x_{v,i}$).
The circular ordering of these edges
around $y_{w,j,v}$ and $y_{w',j,v}$
corresponds to the ordering
of the original edge $ww'$ around, respectively, $w$ and $w'$ in $H$.
The set $B$ of all such edges has size $2(a-1)|V(G)||D|$.
This completes the construction.

We claim that in any $a$-angle cover of $G_H$, the edges of $B$ must be covered only by
their incident vertices labelled with $x$.
Suppose this is not the case, i.e. there exists an angle cover of $G_H$
and a vertex $y_{u',j',v'}$ that covers some edge $e\in B$.
Recall that the induced subgraph with vertices
$y_{u,j,v}$ for fixed $j$ and $v$ is isomorphic to the graph $(V(H),E(H)\setminus D)$.
Additionally, observe that since the addition of the edges in $B$ prevents any possible
angles around a vertex from existing in the induced subgraph that do not also exist
in $H$, the maximum number of edges covered by vertices $y_{u,j,v}$ for fixed $j,v$
is equal to
$|E(H)\setminus D|$.
However, since $y_{u',j',v'}$
covers some edge $e\in B$ (i.e. $e$ is not in the induced subgraph),
the vertices $y_{u,j,v}$ for fixed $j,v$ must then cover all $|E(H)\setminus D|$
edges between each other, and one more, a contradiction to our choice of $D$.

Therefore, each vertex $x_{v,i}$ must cover all of its $2(a-1)$ incident edges in the set $B$,
which leaves one more angle to use for each of these vertices.
In order to complete the reduction,
we wish for the sets of angles that can be covered by $x_{v,i}$ in $G_H$
in any valid $a$-angle cover
to correspond to the single angle covered by $v$ in $G$ in any valid angle cover.
Assume that the $2(a-1)$ incident edges to $x_{v,i}$ in the set $B$ are $b_1,\dots,b_{2(a-1)}$ and
occur in the order $e_1,b_1,b_2,\dots,b_{2(a-1)},e_2$ around $x_{v,i}$,
where edges $e_1$ and $e_2$ correspond to edges $e'_1$ and $e'_2$ from $G$.
For any angle at $v$ covering edges adjacent in the circular ordering
corresponding to $t_1,t_2\in E(G_H)$ such that
$\{t_1,t_2\}\neq\{e_1,e_2\}$,
the corresponding $a$ angles at $x_{v,i}$ are simple to create:
The first angle covers $t_1$ and $t_2$,
the second covers $b_1,b_2$,
the third covers $b_3,b_4$,
until the last covers $b_{2(a-1)-1},b_{2(a-1)}$.
For an angle at $v$ covering edges corresponding to $e_1$ and $e_2$,
the corresponding $a$ angles at $x_{v,i}$ are slightly different:
The first angle covers $e_1$ and $b_1$,
the second covers $b_2$, $b_3$,
the third covers $b_4$, $b_5$,
until the last covers $b_{2(a-1)}$ and $e_2$.

As a result of the equivalence between angle choices,
we conclude that such an $a$-angle cover exists
if and only if the input graph $G$ has an angle cover.
\end{proof}

For $a=1$, the problem is NP-hard by Theorem~\ref{thm:np-hard}.
Also, due to Theorem~\ref{thm:AlgForAangles}, when
$\Delta\leq 2a+3$ and
$\Delta \geq 18$ (implying $a \geq 8$),
all graphs have an $a$-angle cover.
For values $a\in(1,8)$,
the complexity
of the $a$-angle cover problem for maximum degree $\Delta\leq2a+3$ is not known.
However, Theorem~\ref{thm:ExistsImpliesHard} implies that the complexity comes in
exactly two forms:
Either the problem is NP-hard, or the problem is easy and all such graphs
have an $a$-angle cover.

\section{Wide-Angle Cover}
\label{sec:wide-angle-cover}

Another obvious generalization of angle covers is to consider
``wider'' angles.  In the \emph{$m$-wide angle cover problem}
every vertex $v$ can cover $m$ consecutive edges in their
circular order around $v$.

\begin{theorem}
  \label{thm:np-hard-wide}
  For $m \geq 3$, the $m$-wide angle cover problem is NP-hard even for graphs of
  maximum degree $3m-3$.
\end{theorem}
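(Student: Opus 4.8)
The plan is to adapt the NP-hardness reduction from 3-colouring used in Theorem~\ref{thm:np-hard}, replacing the degree-related gadgetry by its $m$-wide analogue. As before, I would reduce from 3-colouring of an input graph~$G=(V,E)$, and build a graph~$H$ with a rotation system so that $(H,f)$ has an $m$-wide angle cover iff $G$ is 3-colourable. The centre vertex $c(v)$ of each gadget still has three bundles of edges, one per colour $k\in\{0,1,2\}$, leading into three parallel paths of ``$e$-vertices'' $e^k_1(v),\dots,e^k_{\deg(v)}(v)$; the colour of~$v$ is read off as the bundle that $c(v)$ does \emph{not} fully cover. The key adjustment is the separator mechanism at each $e$-vertex: in the basic problem two separator edges to degree-1 vertices prevented a single angle from spanning both the ``previous'' path edge and either the ``next'' path edge or the crossing edge to the neighbouring gadget. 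For a wide angle that can cover $m$ consecutive edges, I would instead insert $m-2$ (or as many as needed) pendant degree-1 vertices between the relevant path/crossing edges in the rotation at $e^k_j(v)$, so that no single $m$-wide angle can simultaneously cover the previous path edge and the next path edge, nor the previous path edge and the inter-gadget edge. Counting the pendants and the (at most) constantly many structural edges at an $e$-vertex shows the degree stays at $3m-3$; similarly the centre vertex needs pendant padding so that covering the wrong two of its three bundles is impossible, again capped at $3m-3$.

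The two directions of the equivalence would then mirror Theorem~\ref{thm:np-hard} almost verbatim. For ``$\Rightarrow$'': given a 3-colouring~$t$, set $c(v)$'s wide angle to cover the two bundles of colours $\ne t(v)$ together with the adjacent pendants; for each $e$-vertex on the $t(v)$-coloured path, choose the wide angle covering the edge toward $c(v)$ (or the previous $e$-vertex) plus its pendants, so the inter-gadget edge on that path is left for the neighbour to cover; on the two ``wrong-colour'' paths, choose the wide angle covering the next-$e$-vertex edge and the inter-gadget edge plus pendants. Since $t$ is proper, for every edge $uv\in E$ and every colour $k$, at least one of $e^k_{\cdot}(u)$, $e^k_{\cdot}(v)$ has $k\ne t(\cdot)$ and hence covers the shared inter-gadget edge. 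For ``$\Leftarrow$'': from an $m$-wide angle cover, read the colour $t(v)$ from the bundle $c(v)$ fails to cover; if some edge $uv$ were monochromatic with colour $k=t(u)=t(v)$, then on both the $u$-side and $v$-side the $e$-vertices on the $k$-path are forced (by the pendant separators) to cover toward~$c$ rather than toward the next $e$-vertex or the inter-gadget edge, and propagating this forcing down both paths leaves the shared edge $(e^k_{i^*}(u), e^k_{j^*}(v))$ uncovered, a contradiction.

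The main obstacle I anticipate is the precise bookkeeping of the rotation system at the $e$-vertices and the centre vertex: a wide angle is any window of $m$ consecutive edges, so I must verify that the pendant ``blockers'' are placed so that \emph{no} valid window can cheat — in particular that a window cannot skip over a pendant by wrapping around the circular order, and that there is still enough room for the legitimate windows used in the forward direction. I also need to re-examine the forcing argument of the ``$\Leftarrow$'' direction: with wider angles each $e$-vertex has more covering freedom, so I must confirm that the separator gadget still forces a unique ``direction'' of coverage along each coloured path (toward $c$ versus away from $c$) rather than merely constraining it partially. Once the separator structure is pinned down, the degree bound $3m-3$ and the polynomial size of~$H$ are immediate, and the equivalence proof is a routine (if notation-heavy) extension of Theorem~\ref{thm:np-hard}.
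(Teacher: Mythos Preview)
Your proposal is essentially the paper's own argument: the paper also modifies the Theorem~\ref{thm:np-hard} reduction by replacing each single separator at the $e$-vertices with a block of $m-1$ pendant edges and inserting $m-2$ pendants between each pair of colour-edges at $c(v)$, after which both directions of the equivalence carry over verbatim. The only detail to sharpen is the separator count---you tentatively write $m-2$ at the $e$-vertices, but $m-1$ are needed there to block an $m$-window from spanning the ``previous'' edge and the ``next/crossing'' edge; the centre vertex indeed uses $m-2$ per gap, and it is the resulting $3+3(m-2)=3m-3$ that gives the stated degree bound.
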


\begin{proof}
  We can again modify the NP-hardness reduction described in the proof
  of Theorem~\ref{thm:np-hard}.  We now attach $m-1$ separator edges to $e^k_j(v)$ in place of $(e^k_j(v),a^k_j(v))$ and $m-1$ separator edges to $e^k_j(v)$ in place of $(e^k_j(v),b^k_j(v))$ for all $v \in V$, $k \in \{0,1,2\}$, and $j = 1,\dots,\deg(v)$.
  In addition, we attach $m-2$ separator edges to $c(v)$ between $(c(v),e^k_1(v))$ and $(c(v),e^{k+1}_1(v))$ for all $v \in V$ and $k \in \{0,1,2\}$.
  
  As in the construction of Theorem~\ref{thm:np-hard}, a consecutive set of $m-1$ separator edges prevents $e^k_i(u)$ from covering both $(e^k_i(u),e^k_{i-1}(u))$ and either $(e^k_i(u),e^k_{i+1}(u))$ or $(e^k_i(u),e^k_j(w))$ (where $E_i(u)=E_j(w)=(u,w)$).
  In addition, at $c(u)$ the three sets of $m-2$ separator edges prevent $c(u)$ from covering more than two of the edges $(c(u),e^k_1(u))$ with $k \in \{0,1,2\}$.
\end{proof}

\section{Isomorphic Thickness}
\label{sec:iso}

Our motivation for considering angle covers was the observation that
an angle cover of a plane graph can be used to place the duplicate
vertices in its 2-blowup to show that the original graph is the union
of two isomorphic planar graphs.

\begin{theorem}\label{thm:measles}
If a plane graph $G$ has an angle cover then the 2-blowup of $G$ has isomorphic thickness at most 2.
\end{theorem}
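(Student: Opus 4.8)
The plan is to exhibit an explicit decomposition of the $2$-blowup $\twoB{G}$ into two planar graphs that are isomorphic to each other, using the angle cover $\alpha$ to decide how to route the ``cross'' edges $E_{12}$ and $E_{21}$. Recall that $\twoB{G}$ has vertex sets $V_1 = \{v_1 : v \in V\}$ and $V_2 = \{v_2 : v \in V\}$, and for each edge $uv \in E$ it contains the four edges $u_1v_1$, $u_2v_2$, $u_1v_2$, $u_2v_1$. The first graph $G^{(1)}$ will be a drawing in which $V_1$ sits where $V$ sits in the given plane embedding of $G$ and $V_2$ is placed in a ``shifted'' copy; $G^{(2)}$ will be a congruent copy of $G^{(1)}$ obtained by swapping the roles of the two layers. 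Since $G^{(2)}$ will literally be $G^{(1)}$ with subscripts $1 \leftrightarrow 2$ exchanged, the two are isomorphic by construction, so the entire content of the proof is to show that the edges of $\twoB{G}$ can be split into two halves, each half drawable planarly, with the split symmetric under the layer swap.

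The key idea for the split is this. In each layer $\ell \in \{1,2\}$ put a planar copy of $G$ on the vertices $V_\ell$; that accounts for $E_{11}$ and $E_{22}$, one per graph, and these are drawn without crossings because $G$ is plane. It remains to distribute the ``diagonal'' edges $u_1v_2$ and $u_2v_1$ of each edge $uv \in E$, two diagonals per edge, one into each of the two graphs. Here is where the angle cover enters: for each edge $uv\in E$, $\alpha$ guarantees that at least one endpoint, say $u$, has $uv$ as one of the two edges of its chosen angle $\alpha(u)$. I will route one diagonal of $uv$ as a short curve leaving $u$ ``between its two layers'' — drawn inside the angular wedge $\alpha(u)$ at $u$'s position — and going to the nearby shifted position of $v$ in the other layer. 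The point of the angle cover is that at every vertex $u$ only two consecutive slots around $u$ are ever used by such diagonal curves (the two edges of $\alpha(u)$), so the diagonals emanating from $u$ can all be drawn in a single thin wedge without crossing each other or the edges of the planar copy of $G$ at $u$; and each diagonal curve only needs to traverse a small neighbourhood of the segment $uv$, so curves belonging to different edges of $G$ do not interfere. Concretely, I would place $V_2$ as a copy of $V_1$ translated by a tiny generic vector $\varepsilon$, draw each $E_{\ell\ell}$-edge as the straight segment (a slight perturbation of the plane drawing of $G$), and draw each selected diagonal as a curve that stays within distance $O(\varepsilon)$ of the corresponding segment of $G$, leaving its ``$\alpha$-endpoint'' through the chosen wedge. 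Assigning, for each $uv$, the diagonal $u_1 v_2$ to $G^{(1)}$ and the diagonal $u_2 v_1$ to $G^{(2)}$ (choosing $u$ to be an endpoint with $uv \in \alpha(u)$, and breaking ties arbitrarily) gives a layer-symmetric partition: the swap $1 \leftrightarrow 2$ sends the $G^{(1)}$-diagonal of $uv$ to the $G^{(2)}$-diagonal and vice versa. Hence $G^{(2)} \iso G^{(1)}$, both are planar, and their union is all of $\twoB{G}$, so $\iota(\twoB{G}) \le 2$.

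The main obstacle, and the step that needs care rather than a picture, is verifying that the diagonal curves really can be drawn without crossings — both against the planar copies of $G$ and against one another. The crux is the local picture at a vertex $u$: the edges of the two planar copies of $G$ incident to $u_1$ (at $u$'s position) and to $u_2$ (at the shifted position) occupy all the rotational slots around those points, so an incoming diagonal must ``thread'' between consecutive slots; the angle-cover condition is exactly what guarantees that the only slots we need to thread through at $u_1$ are the two consecutive edges of $\alpha(u)$, leaving a free wedge there, and similarly the only diagonals arriving at $u_2$ are those whose other endpoint $w$ selected $uw$, which again form a consecutive bundle because $w$ sent them through $w$'s wedge and they arrive at $u_2$ in the cyclic order inherited from $G$'s rotation at — wait, one must check they arrive in a consistent cyclic order; this is where I would do the bookkeeping carefully, routing each diagonal to arrive at $u_2$ immediately ``outside'' the corresponding $E_{22}$-edge $u_2 v_2$, so that the arriving diagonals interleave harmlessly with the plane drawing. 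Making this rigorous amounts to a careful but routine continuity/perturbation argument; I expect the cleanest writeup to fix $\varepsilon$ small, describe each curve as lying in an $\varepsilon$-neighbourhood of its $G$-segment, and argue crossing-freeness segment-by-segment, invoking planarity of $G$ for the ``bulk'' of each curve and the angle-cover wedge condition only at the two endpoints.
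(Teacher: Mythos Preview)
Your overall strategy is the paper's: put $E_{11}$ into one graph and $E_{22}$ into the other, distribute the two cross-edges of each $uv$ one to each side using the angle cover, and obtain the second graph from the first by swapping subscripts. The one substantive difference is \emph{which} cross-edge of $uv$ you put into $G^{(1)}$: you assign $u_1 v_2$ (selector $u$ keeps subscript~$1$), while the paper assigns $u_2 v_1$ (selector keeps subscript~$2$). This looks like a cosmetic choice, but it is precisely what makes your planarity argument hard and the paper's trivial.

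With the paper's choice, the ``new'' vertex $u_2$ is adjacent only to the two vertices $x_1,y_1$ with $\alpha(u)=\{ux,uy\}$, so $\deg(u_2)\le 2$. One simply places $u_2$ inside the wedge $\alpha(u)$ and draws its (at most two) edges as short segments into that wedge; planarity is immediate. With your choice, the new vertex $v_2$ is adjacent to \emph{every} $u_1$ such that $u$ selected $uv$, and these $u$'s need not be consecutive around~$v$. Your uniform-translation drawing then breaks at the $v_2$ end: a diagonal $(u_1,v_2)$ that runs alongside the $E_{11}$-edge $u_1 v_1$ must, in a neighbourhood of $v_1$, detour to the fixed point $v_2 = v_1+\varepsilon$; whenever the direction of $\varepsilon$ does not lie in the angular sector at $v_1$ adjacent to $u_1 v_1$, that detour crosses some $E_{11}$-edge emanating from $v_1$. (Your own ``wait, one must check\ldots'' is pointing at exactly this spot; note also that you invoke the $E_{22}$-edge $u_2 v_2$ as a routing guide at $u_2$, but $E_{22}$ is not in $G^{(1)}$, so that sentence does not constrain anything.) The ``consecutive bundle'' claim is false in general: the wedge condition controls where a diagonal \emph{leaves} its selector, not where it \emph{arrives}, so selectors of~$v$ can approach $v_2$ from non-adjacent sectors.

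The fix is a one-symbol change: swap which cross-edge goes into $G^{(1)}$, so that the subscript-$2$ copy is the selector and has degree~$\le 2$, and place it in its wedge. Your last paragraph then shrinks to a couple of sentences and matches the paper's proof.
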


\begin{proof}
Let $V$ and $E$ be the vertices and edges of $G$.
Let $V_a = \Set{v_a}{v \in V}$ for $a=1,2$.
Let $E_{ab} = \Set{(u_a,v_b)}{(u,v) \in E}$ for all $a,b \in \{1,2\}$.
Let $\lambda$ be an angle cover for $G$.
Let $H$ be the graph with vertices $V_1 \cup V_2$ and edges $E_{11} \cup \bigcup_{v \in V}
\Set{(v_2,x_1), (v_2,y_1)}{\lambda(v) = \{(v,x),(v,y)\}}$, however,
if for some edge $(u,v)$ in $E$, both $\lambda(u)$ and $\lambda(v)$ contain $(u,v)$ then add only $(u_2,v_1)$ or $(u_1,v_2)$ (not both) to $H$.
Note that since $\lambda$ is an angle cover for $G$, for every edge $(u,v) \in E$, either $(u_2,v_1)$ or $(u_1,v_2)$ is an edge in $H$.
Let $\widetilde{H}$ be the graph isomorphic to $H$ where vertex $v_a$ maps to $v_b$ with $b = 3-a$.
We claim that $H$ and $\widetilde{H}$ are planar graphs whose union is the 2-blowup of $G$.

To see that $H$ (and hence $\widetilde{H}$) is planar, fix a planar
straight-line drawing of~$G$ realizing the embedding for which
$\lambda$ is the angle cover for~$G$.
Let $v$ also denote the point representing vertex $v$ in this embedding.
Place $v_2$ close enough to $v$ in the angle formed by the edges $\lambda(v) = \{(v,x),(v,y)\}$, so that the segments $(v_2,x)$ and $(v_2,y)$ do not cross any edge segments of $G$.
Such a placement exists since the segments $(v,x)$ and $(v,y)$ do not cross any edge segment of $G$.
Relabel each vertex $u$ in $G$ as $u_1$.
Thus the edges $(u_1,v_1)$ do not cross and
the remaining edges in $H$ (of the form $(v_2,u_1)$) do not cross these segments from the embedding of $G$
and do not cross each other since only one of the two crossing segments $(v_2,u_1)$ and $(v_1,u_2)$ is in $H$.
This creates a planar drawing of $H$;
see Fig.~\ref{fig:angle_cover_to_2blowup}.

\begin{figure}[ht]
\centering
\includegraphics[page=1]{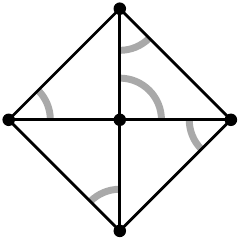}
\quad
\includegraphics[page=2]{isomorphic-angle-example}
\caption{An example of how the existence of an angle cover in a
graph $G$ implies that the $2$-blowup of $G$ has isomorphic
thickness $2$.}
\label{fig:angle_cover_to_2blowup}
\end{figure}

The graph $H$ contains all edges $(u_1,v_1)$ for $(u,v) \in E$.
It also contains the edge $(u_1,v_2)$ or $(u_2,v_1)$ for every $(u,v) \in E$ since $\lambda$ is an angle cover.
Hence, $\widetilde{H}$ contains, for every $(u,v) \in E$, the edge $(u_2,v_2)$ and the edge $(u_1,v_2)$ or $(u_2,v_1)$ that is not in $H$.
Thus the union of $H$ and $\widetilde{H}$ equals the 2-blowup of $G$.
\end{proof}

As one would expect, not every planar graph whose 2-blowup has
isomorphic thickness~2 has a plane embedding that admits an angle
cover.

\begin{observation}
  There is a graph $G$ whose 2-blowup has isomorphic thickness~2 but none of the planar
  embeddings of~$G$ admits an angle cover.
\end{observation}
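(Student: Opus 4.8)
The plan is to exhibit a concrete small graph $G$ together with an explicit isomorphic thickness-2 decomposition of its 2-blowup, and then separately argue that no planar embedding of $G$ admits an angle cover. The two halves are essentially independent, so I would organize the proof accordingly. For the ``positive'' half — that $\twoB{G}$ has isomorphic thickness 2 — the cleanest route is \emph{not} to go through an angle cover of $G$ (which cannot exist, by the other half), but rather to directly split $\twoB{G}$ into two isomorphic planar pieces $H$ and $\widetilde H$. Recall from the proof of Theorem~\ref{thm:measles} that such a decomposition amounts to orienting, for each $(u,v)\in E$, exactly one of the two ``cross'' edges $(u_1,v_2)$, $(u_2,v_1)$ into $H$ (the other going to $\widetilde H$), together with all of $E_{11}$ in $H$ and all of $E_{22}$ in $\widetilde H$; the angle-cover hypothesis in that theorem was merely a \emph{sufficient} condition guaranteeing that such a choice yields a planar $H$. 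So for a carefully chosen $G$ one can simply draw $\twoB{G}$ and point to a planar $H$ directly.

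For the candidate $G$ I would reuse one of the counterexamples already built in Section~\ref{sec:examples}, or a similarly rigid small plane graph — ideally one all of whose combinatorial embeddings fail to admit an angle cover by the kind of counting argument used in the observations (too many edges forced to be covered ``twice'' near a low-degree-surrounded vertex, or a red/black alternation argument as in the Laman example of Fig.~\ref{fig:counter_small_laman}). The key point to verify here is robustness across \emph{all} embeddings: the earlier observations only rule out \emph{one} embedding each, so I would pick $G$ so that the obstruction is embedding-independent — e.g.\ a 3-connected planar graph (unique embedding up to reflection) that fails the angle-cover edge-count locally, or a graph where every vertex adjacent to some critical edge set has its incident edges forced into an alternating pattern regardless of rotation.

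The steps, in order: (1) fix the specific graph $G$ and draw it; (2) argue that in \emph{every} planar embedding of $G$ there is no angle cover, either by the global edge-density-with-double-cover counting argument (as in the first two observations) or by a local parity/alternation argument that does not depend on the rotation system; (3) independently, produce an explicit edge-bipartition of $\twoB{G}$ into $H$ and $\widetilde H=\{v_1\leftrightarrow v_2\}(H)$ with $E_{11}\subseteq H$, $E_{22}\subseteq \widetilde H$, and exactly one cross edge per original edge in each; (4) exhibit planar drawings of $H$ and $\widetilde H$ — by symmetry one drawing suffices — thereby certifying isomorphic thickness $\le 2$ (and $=2$ since $\twoB{G}$ is non-planar whenever $G$ has a cycle, which one checks trivially). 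I expect step~(2) to be the main obstacle: making the non-existence of an angle cover \emph{simultaneous over all embeddings} is strictly stronger than any single observation in Section~\ref{sec:examples}, so the right $G$ must be chosen so that its obstruction is combinatorial rather than tied to a particular rotation system — a 3-connected choice, or one where the ``bad'' vertices have degree low enough that their rotation is irrelevant, is the way I would secure this.
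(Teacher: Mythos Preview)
Your high-level plan matches the paper's: fix a specific $G$, argue that no planar embedding admits an angle cover, and independently certify $\iota(\twoB{G})=2$ by an explicit decomposition into $H$ and its vertex-swap $\widetilde H$. Where you diverge --- and where a real gap opens --- is in how $G$ is chosen and how the two halves are made to fit together.

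The paper does \emph{not} reuse a Section~\ref{sec:examples} counterexample or invoke 3-connectivity. It builds $G$ from two disjoint copies of a graph $H$ with $|E(H)|=2|V(H)|$ (the graph of Fig.~\ref{fig:example}), linked through a small connector. Because each $H$-copy is angle-saturated, any angle cover of $G$ must spend every vertex of each copy entirely inside that copy; contracting the copies reduces the question to a \emph{partial} angle cover of a tiny graph $G'$ in which the two contracted vertices cover nothing. The paper then shows that the required adjacency of two specific edges around a single vertex $c$ in $G'$ is incompatible with planarity in \emph{every} embedding --- an argument about the planar embeddings of $G'$, not a counting or parity argument and not a uniqueness-of-embedding argument.

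For the thickness half, the paper exploits that $G$ \emph{almost} has an angle cover: every vertex except $c$ can be assigned an angle consistently. Running the Theorem~\ref{thm:measles} construction on all vertices but $c$ then yields the planar layer directly. Your plan treats the decomposition as something to be produced ad hoc after $G$ is fixed; the paper instead engineers $G$ so that the decomposition comes for free from a near-angle-cover.

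This coupling is what your proposal is missing. Picking an arbitrary no-angle-cover graph --- or a 3-connected one --- may hand you step~(2), but it leaves steps~(3)--(4) wide open: without a near-angle-cover you have no systematic way to manufacture the planar layer, and you would be searching by hand for a planar drawing of a fairly dense graph. Conversely, the Section~\ref{sec:examples} graphs are only known to fail in \emph{one} embedding (as you correctly note), and your 3-connected alternative would require constructing a new example from scratch and then still facing the thickness step. The saturated-subgraph-plus-connector design is precisely what lets the paper satisfy both constraints at once.
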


\begin{figure}[ht]
  \centering
  \includegraphics[page=1]{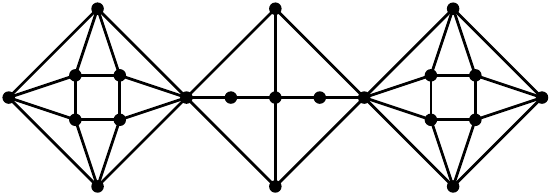}
  \caption{A graph $G$ whose 2-blowup has isomorphic thickness~$2$.
    The graph has no planar embedding admitting an angle cover.}
  \label{fig:egIT2noAA}
\end{figure}

\begin{proof}
  We show that the graph $G$ presented in Fig.~\ref{fig:egIT2noAA} has
  all the desired properties.  $G$ contains two disjoint induced
  copies of the graph $H$ presented in Fig.~\ref{fig:example}.  $H$
  has an angle cover and, since it has $8$ vertices and $16$ edges,
  any angle cover will cover exactly the edges in $H$.  Indeed, if
  $H_1$ and $H_2$ represent the distinct subgraphs in $G$ isomorphic
  to $H$, a planar embedding of $G$ has an angle cover if and only if
  the embedded $H_1$ and $H_2$ have angle covers and after contracting
  both $H_1$ and $H_2$ to single vertices the resulting graph $G'$
  (see Fig.~\ref{fig:egIT2noAAgreen}(a)) has a partial angle cover,
  where the contracted vertices do not cover an angle. This is
  equivalent to finding a planar embedding for $G'$ in which the green
  edges are adjacent around the vertex $c$, since the covered edges of
  all other vertices are forced (and include all edges except the
  green edges).  We observe that such an embedding exists only if the
  result of contracting $(c, b_1)$ and $(c,b_2)$ has an embedding with
  the same adjacency property.

\begin{figure}[htb]
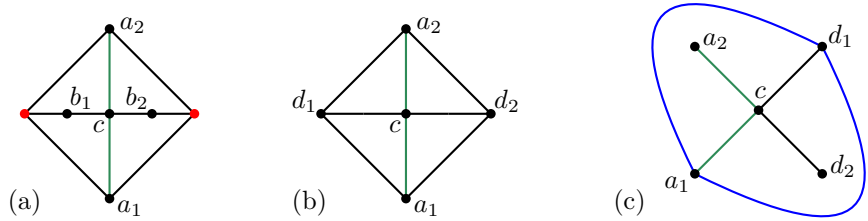

  \centering
  \includegraphics[page=2]{noanglecoverbutisomorphicthickness2example}
  \qquad\quad
  \includegraphics[page=3]{noanglecoverbutisomorphicthickness2example}
  \qquad\quad
  \includegraphics[page=4]{noanglecoverbutisomorphicthickness2example}

  \caption{(a) A graph $G'$ whose partial angle cover (the red
    vertices cannot cover an angle)
    requires the green edges to be adjacent in the ordering around~$c$.
    (b) A contraction of~$G'$ which has such an embedding if and only
    if the same can be said of~$G'$.
    (c) Making the greed edges adjacent ruins planarity.}
  \label{fig:egIT2noAAgreen}
\end{figure}

If we suppose that such an embedding exists, and then draw the local structure
at $c$, then by symmetry we could swap the labels of $a_1,a_2$ and $d_1,d_2$ to match Fig.~\ref{fig:egIT2noAAgreen}(c). However, any edge from $d_1$ to $a_1$
must take the path of one of the blue edges, and in either case we must place $d_2$ and $a_2$ in different
faces. Therefore, this graph is not planar, and therefore $G$ has no planar embedding with an angle cover.

\begin{figure}[htb]
  \centering
  \includegraphics[page=5]{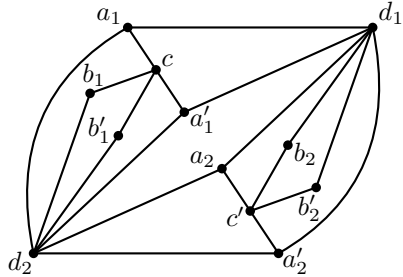}
  \caption{A labelled planar embedding of the planar graph used to
    bound the isomorphic thickness of the $2$-blowup of the graph~$G$
    in Fig.~\ref{fig:egIT2noAA}.}
  \label{fig:egIT2noAAblowup}
\end{figure}

Finally, by Fig.~\ref{fig:egIT2noAAblowup},
we see that the union of graphs $B_1$, given by
Fig.~\ref{fig:egIT2noAAgreen}(a), and $B_2$, given by swapping the
vertices in $B_1$ with their corresponding primed labeled vertices, is
exactly the $2$-blowup of $G$. Therefore,
the isomorphic thickness of the $2$-blowup of $G$ is $2$.
\end{proof}

It may be useful to observe that the graph $B_1$ in the figure was
generated by the process given in the proof of Theorem~\ref{thm:measles} on all vertices except
$c$, assuming the necessary angle covers for all other vertices.

\section{Angle Allocation}
\label{sec:angle-allocation}

We now turn to a relaxation of angle cover where each vertex
can select a different number of (2-wide) angles, and still, all edges
must be covered.  We call this an \emph{angle allocation} and it is 
\emph{optimal} if it uses the minimum number of angles among all
allocations.

\begin{theorem}
  Given a graph $G=(V,E)$ with a rotation system, an optimal angle allocation can be computed in $O(|E|^{3/2})$ time.
\end{theorem}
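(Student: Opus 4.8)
The plan is to reduce optimal angle allocation to a minimum-cost or maximum-cardinality matching problem on an auxiliary bipartite graph, in the spirit of the \Gmat construction already used in the introduction to test low edge density. The key observation is that an angle at a vertex $v$ is just a consecutive pair of incident edges, so covering edges by angles is a question of packing edge-pairs; once we know how many angles each vertex uses, the covering condition is purely local. First I would reformulate the problem: an angle allocation assigns to each vertex $v$ a set of angles (consecutive pairs in the rotation at $v$), and the edges not paired into any angle at $v$ must each be picked up by an angle at their other endpoint. Minimizing the total number of angles is equivalent to maximizing the number of edges that are ``shared'', i.e., the number of edges $e=uv$ that are covered by an angle at one endpoint while also being the partner, inside some angle, of a \emph{different} edge at that same endpoint — intuitively we want each angle to do double duty.

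The cleanest route is as follows. For a fixed vertex $v$ of degree $d$, think of the cyclic sequence of its incident edges; choosing a set of angles at $v$ is choosing a set of non-crossing consecutive pairs, i.e., a matching in the cycle $C_d$ whose vertices are the edges incident to $v$. An edge of $G$ is ``covered at $v$'' iff it is matched in this cycle-matching at $v$ (or $d\le 2$, where it is free). An angle cover, and more generally an angle allocation covering all of $E$, exists iff we can orient/assign each edge $e=uv$ to at least one endpoint that matches it. So I would build the following: take the graph whose vertices are the edges of $G$, and for each vertex $v$ of $G$ put a cycle $C_{\deg(v)}$ on its incident edges (this is essentially the ``medial''-type graph, and indeed the paper has reserved the macro \Gmed); a matching $M$ in the disjoint union of these cycles covers edge $e$ iff $e$ is matched by the $C_{\deg(u)}$-copy or the $C_{\deg(v)}$-copy. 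We need a matching that covers \emph{every} vertex of this auxiliary graph (every edge of $G$), and among those we want to minimize the number of angles. Since each angle at $v$ is one matching edge in $C_{\deg(v)}$ and covers at most two $G$-edges, and each $G$-edge must be covered, the number of angles is $|E|$ minus the number of $G$-edges that are ``doubly covered'' (matched on both cycles) plus a correction for low-degree vertices; equivalently, minimizing angles is a maximum-matching problem on a suitable graph.

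Concretely, I would phrase it as a maximum-cardinality matching. Build $\Gmed$ as above (one node per edge of $G$, plus the cycle structure around each vertex), discard the cycles at degree-$\le 2$ vertices (those edges are free), and seek a matching that saturates all ``edge-nodes''. If no such saturating matching exists the instance has no angle allocation at all; otherwise, any saturating matching yields an allocation, and an allocation is optimal precisely when it saturates with the fewest matching edges, which is the same as maximizing, over saturating matchings, the number of $G$-edges incident to two matching edges — and this in turn is a standard reduction to one more maximum-matching computation (or to a $b$-matching/degree-constrained subgraph problem) on a graph of size $O(|E|)$. General maximum matching runs in $O(\sqrt{|V|}\,|E|)$ time by Micali–Vazirani; since our auxiliary graph has $O(|E|)$ vertices and $O(|E|)$ edges (each vertex $v$ of $G$ contributes $\deg(v)$ cycle-edges, summing to $2|E|$), this gives the claimed $O(|E|^{3/2})$ bound.

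The main obstacle, and the step I would spend the most care on, is the exact bookkeeping that turns ``minimize number of angles'' into a single matching objective: an angle can cover one or two edges, a $G$-edge may be covered at one or both endpoints, and degree-$\le 2$ vertices cover their edges for free while still being allowed to spend angles — so I need to show that the minimum number of angles equals $|E^{\ge 3}| - \mu^*$ for an appropriate matching parameter $\mu^*$ (roughly, the max number of edge-nodes that can be ``merged'' into shared angles), and that maximizing $\mu^*$ subject to covering all edge-nodes is itself solvable by one maximum-matching call rather than requiring weighted matching (whose cubic-ish running time would blow the bound). I expect this to work out because covering-all-edge-nodes-with-fewest-matching-edges in a disjoint union of cycles is highly structured, but verifying that the reduction stays unweighted and within $O(|E|)$ size — and handling the non-simple-graph subtleties (parallel edges from the 4-regularization trick, self-loops) — is where the real argument lies; everything after that is invoking Micali–Vazirani.
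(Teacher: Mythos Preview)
You have located the right auxiliary object (the medial graph $\Gmed$) and the right algorithm (Micali--Vazirani), but you miss the one-line identity that makes the whole thing work, and in its place you set up a constrained-matching formulation that is not correct as stated.

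The paper simply proves that the optimum equals $|E|-\nu(\Gmed)$, where $\nu(\Gmed)$ is the maximum matching size in $\Gmed$, and then runs one unconstrained maximum-matching call. Upper bound: the $|M|$ matched pairs are independent angles covering $2|M|$ distinct $G$-edges; for each of the remaining $|E|-2|M|$ edges, add any angle containing it, giving $|E|-|M|$ angles in total. Lower bound: given any allocation with $k$ angles, assign each $G$-edge to one covering angle; the angles that receive two assignments form a matching in $\Gmed$ of size at least $|E|-k$, so $k\ge |E|-\nu(\Gmed)$. That is the entire argument.

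Your route, by contrast, has several concrete problems. First, ``choosing a set of angles at $v$ is \dots\ a matching in $C_d$'' is false: angles at the same vertex are allowed to share an edge, and the paper's optimal allocation generally does overlap (the extra one-per-leftover-edge angles may hit matched edges). Second, you impose a saturating constraint (``a matching that covers every vertex of this auxiliary graph''), but $\Gmed$ need not have a perfect matching at all (for instance when $|E|$ is odd), and this has nothing to do with feasibility: an angle allocation \emph{always} exists, since every vertex may select all of its angles. Your sentence ``If no such saturating matching exists the instance has no angle allocation at all'' is therefore wrong. Third, under either reading of your auxiliary graph the optimization step collapses: among perfect matchings of $\Gmed$ there is nothing to minimize (they all have $|E|/2$ edges), and under the ``disjoint union of cycles'' reading you are asking for a minimum-size set of cycle-edges hitting at least one copy of every $G$-edge, which is a covering problem, not a matching problem, and you give no reduction back to unweighted matching. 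The ``main obstacle'' you flag is exactly where the paper's identity $\text{opt}=|E|-\nu(\Gmed)$ dissolves the difficulty; once you see that, the saturating constraint, the degree-$\le 2$ special-casing, and the worry about weighted matching all disappear.
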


\begin{proof}
  Consider the \emph{medial graph} $\Gmed=(E,A)$ associated with the
  given graph~$G=(V,E)$ and its rotation system.
  The vertices of~\Gmed are the edges of~$G$, and
  two vertices of~\Gmed are adjacent if the corresponding edges of~$G$
  are incident to the same vertex of~$G$ and consecutive in the
  circular ordering around that vertex.  The medial graph is always
  4-regular.  If $G$ has no degree-1 vertices, \Gmed has no loops.
  If~$G$ has minimum degree~3, \Gmed is simple.

  Find a maximum matching $M$ in~\Gmed using $O(\sqrt{|E|}|A|)$
  time~\cite{MicaliVazirani80}.  The edges in~$M$ correspond
  to \emph{independent} angles that collectively single-cover $2|M|$
  edges of $G$ (i.e., these edges are covered by only one of their adjacent vertices).
  Add additional angles, one for every uncovered edge
  of $G$, to obtain an allocation~$\alpha \colon V \to 2^{E \times E}$
  of total size $|M|+(|E|-2|M|)=|E|-|M|$.
  We claim that~$\alpha$
  minimizes the number of angles.  Indeed, suppose that there were an optimal
  angle allocation with fewer angles that covered all edges of $G$,
  and then, since no angle will double-cover two edges,
  there would be a larger set of independent angles
  (a set of angles that do not double-cover any edge),
  contradicting the maximality of~$M$.
\end{proof}

\section{Conclusion and Open Problems}
\label{sec:conclusion}

For even~$\Delta$, we have shown that every maximum-degree-$\Delta$ graph with a
rotation system admits an $a$-angle cover for $a \approx \lceil
\Delta/3 \rceil$; see Theorem~\ref{thm:AlgForAangles}.  This is optimal for
$\Delta=6$ and $\Delta=8$ (see Theorem~\ref{thm:np-hard-two}).  For $\Delta=4$,
however, we need only \emph{one} angle per vertex, so we pose the
following questions:

For even $\Delta$, does every graph of maximum degree~$\Delta$ with a
rotation system admit an $a$-angle cover for $a \approx \lceil c
\Delta \rceil$, where $c < 1/3$?

What about graphs of maximum degree~$\Delta$ if $\Delta$ is odd?

We have seen that there is a (planar but not plane) Laman graph that
does not admit an angle cover; see Fig.~\ref{fig:counter_small_laman}.
Does every \emph{plane} Laman graph admit an angle cover?  Can we
exploit the structure of Laman graphs to construct angle covers or is
it NP-hard to decide whether a given Laman graph admits an angle
cover?

Does every graph whose 2-inflation has isomorphic thickness~2 admit an
angle cover?


\bibliographystyle{abbrvurl}
\bibliography{refs}

\end{document}